\documentclass[11pt]{article}

\usepackage[a4paper, margin=1in]{geometry}
\usepackage{parskip}
\usepackage{graphicx}
\usepackage{subcaption}
\usepackage{algorithm}
\usepackage{algorithmic}
\usepackage{comment}
\usepackage{amsmath,amssymb,amsfonts, amsthm}
\usepackage{url}
\usepackage{float}
\usepackage{makecell}
\usepackage{colortbl}
\usepackage{xcolor}
\usepackage{hyperref}
\usepackage{tikz}
\usepackage{caption}
\theoremstyle{plain}
\newtheorem{theorem}{Theorem}[section]

\newtheorem{lemma}[theorem]{Lemma}

\theoremstyle{definition}

\theoremstyle{remark}

\newcommand{\defn}{\textbf}

\newcommand{\pred}{\hat{\sigma}}
\newcommand{\ind}{i}
\newcommand{\G}{\tilde{G}}

\title{\textbf{Incremental Strongly Connected Components \\ with Predictions}}

\author{
Ronald Deng \\ Williams College \\ \texttt{rd8@williams.edu}
\and 
Samuel McCauley \\ Williams College \\ \texttt{srm2@williams.edu}
\and
Aidin Niaparast\thanks{This material is based upon work supported in part by the Air Force Office of Scientific Research under award number FA9550-23-1-0031.} \\ Carnegie Mellon University \\ \texttt{aniapara@andrew.cmu.edu}
\and 
Helia Niaparast \\ Carnegie Mellon University \\ \texttt{hniapara@andrew.cmu.edu} 
\and 
Bennett Ptak \\ Williams College \\ \texttt{bp14@williams.edu}
\and 
Shirel Quintanilla \\ Williams College \\ \texttt{zsq1@williams.edu}
\and
Shikha Singh \\ Williams College \\ \texttt{shikha.singh@williams.edu}
\and 
Nathan Vosburg \\ Williams College \\ \texttt{njv1@williams.edu}
}

\date{}

\begin{document}

\maketitle

\begin{abstract}
Algorithms with predictions is a growing area that aims to leverage machine-learned predictions to design faster beyond-worst-case algorithms.  
In this paper, we use this framework to design a learned data structure for the incremental strongly connected components (SCC) problem.  In this problem, the $n$ vertices of a graph are known a priori and the $m$ directed edges arrive over time. The goal is to efficiently maintain the strongly connected components of the graph after each insert.  Our algorithm receives a possibly erroneous prediction of the edge sequence and uses it to precompute partial solutions to support fast inserts. We show that our algorithm 
achieves nearly optimal bounds with good predictions and its performance smoothly degrades with the prediction error.  We also implement our data structure and perform experiments on real datasets. Our empirical results show that the theory is predictive of practical runtime improvements.
\end{abstract}

\section{Introduction}\label{sec:intro}
Incremental computation on graphs is a fundamental primitive in many database and optimization problems. In an incremental graph problem, the vertices of the graph $G$ are known ahead of time and the edges are revealed over time.  The goal is to design an efficient data structure for answering queries at time $t$ about some property (such as connectivity or shortest path) of the current graph $G_t$.  
Designing fast \emph{worst-case} algorithms for incremental graph problems is an area of
extensive research~\cite{chen2024almost,bender2015new,bernstein2021incremental,fan2017incremental,haeupler2012incremental}. 

Algorithms designed to be fast in the worst case provide strong guarantees against adversarial inputs.  However, these algorithms fail to exploit the correlations in the structure of typical instances.  As most computations are performed repeatedly on similar datasets, heuristics optimized for domain specific inputs tend to be faster in practice.   

A growing body of work, known as algorithms with predictions~\cite{mitzenmacher2022algorithms}, seeks to bridge the gap between simple heuristics that lack strong theoretical guarantees and more complex worst-case algorithms that are provably efficient but often impractical. The main idea is to use machine-learned predictions to capture correlations in the input instances and leverage them to guide algorithmic decisions.

Predictions are particularly effective for online or dynamic problems, where the input is not known ahead of time.  Predictions trained on historical data can then serve as a proxy for future inputs---turning a hard \emph{online} problem into an easier-to-solve \emph{offline} one.  Using such predictions, an algorithm can then precompute partial solutions ahead of time and consequently perform faster updates and queries at runtime.  The algorithms designed in this framework are referred to as \emph{learning-augmented} or \emph{learned} algorithms.  

The running time of a learned algorithm is measured in terms of the quality of the prediction---the bound is stated in terms of both the size of the input and the prediction error.  Ideally, a learned algorithm for an online problem (1) is nearly as fast as the best-offline solution when the prediction is perfect, (2) has bounded worst-case performance cost when the prediction is arbitrarily wrong, and (3) has performance that degrades gracefully with error.  These properties in order are referred to as \emph{consistency, robustness}, and \emph{smoothness} respectively in the literature.  Thus, learned algorithms leverage good predictions to be fast and have provable safeguards against bad predictions.

\subparagraph*{Incremental SCC.}  In this paper, we study the \emph{incremental strongly connected components (SCC)} problem in the algorithms with predictions model.  In this problem, the $n$ vertices $V$ of a directed graph $G$ are known a priori and the $m$ directed edges $E$ are inserted one by one. Let $e_1, \dots, e_m$ denote the sequence of edges and let $G_t= (V, \{e_1, \ldots, e_t\})$ denote the graph at time $t$. 
The goal is to maintain the \textbf{strongly connected components} of $G_t$ at all times $t$.
A strongly connected component $S \subseteq V$ is a maximal set of vertices of $G$ such that every pair of vertices $u, v$ are \emph{mutually reachable}, that is, there is a directed path from $v$ to $u$  and vice versa.  

Maintaining SCCs is a fundamental building block in many applications, e.g. social networks~\cite{dhingra2016finding}, graph processing~\cite{zhang2017efficient}, program analysis~\cite{hardekopf2009pointer} and fault-tolerant networks~\cite{baswana2019efficient}.

DFS-based approaches such as Tarjan's algorithm~\cite{tarjan1972depth} can compute the SCCs of a static offline graph $G$ in $O(n+m)$ time.  
As a step towards solving the incremental SCC problem, consider the \emph{offline incremental problem}. In this variant,  the edge insert sequence $\sigma = e_1, e_2, \ldots, e_m$ is known ahead of time and the goal is to design a data structure to answer queries about the SCCs of each intermediate graph $G_t$.  Naively running Tarjan's repeatedly takes $O(m^2)$  time.\footnote{We assume $m \geq n$ to simplify the runtimes.}  A folklore divide-and-conquer approach~\cite{sccblog} improves upon this naive approach and solves the offline incremental SCC problem in $O(m \log m)$ total time.

\subparagraph*{Offline to Learned Incremental SCC.}
To design a learned algorithm for the \emph{online incremental problem}, we assume that a prediction $\hat{\sigma}$ of the edge sequence $\sigma$ is given to the algorithm. We define the \defn{prediction error} $\eta$ as the maximum error between when an edge was predicted to arrive in $\hat{\sigma}$ and its actual arrival time in $\sigma$. To ensure this metric is well-defined, we assume that $\hat{\sigma}$ is a permutation of $\sigma$ (we discuss this assumption further in Section~\ref{sec:model}). If the prediction is perfect  ($\eta = 0$), the problem reduces to the offline incremental variant.  If the prediction is arbitrarily erroneous ($\eta = m$), the algorithm effectively gets no information about the input and the problem reduces to the worst-case model. The main challenge is to design an algorithm that smoothly interpolates between the two regimes as a function of the prediction error $\eta$.  

\subsection{Our Contributions}
One of our main contributions is designing the first learned data structure for the incremental SCC problem which is consistent, robust and smooth with respect to the prediction error.  

\begin{theorem}\label{thm:runtime}
We design a learned data structure that, given prediction $\hat{\sigma}$ with error $\eta$, maintains the strongly connected components of the graph in total time $O(m \log m \cdot \max\{\eta, 1\})$ for $m$ inserts. Queries about the SCC of a given vertex take $O(1)$ time.
\end{theorem}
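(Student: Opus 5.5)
The plan is to precompute, from $\pred$, the offline solution given by the folklore divide-and-conquer algorithm, and to use it online as a guide that is trustworthy up to a window of width $\eta$. For every pair $(u,v)$ of vertices, the divide-and-conquer computes the \emph{predicted merge time} $\hat t(u,v)$: the first prefix of $\pred$ whose graph contains $u$ and $v$ in the same SCC. This is stored implicitly as a merge forest (a dendrogram) of total size $O(n)$, built in $O(m\log m)$ time. The offline algorithm recurses on the time interval $[\ell,r]$. It takes the current condensed graph, restricted to edges with predicted time at most the midpoint, runs Tarjan to find which edges become internal to SCCs by the midpoint, and sends those edges left and the remaining (contracted) edges right. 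Each edge then appears in $O(\log m)$ levels with $O(1)$ work per level.

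\textbf{Key structural claim.} For an edge $e$, let $\hat t(e)$ be its predicted time and $t(e)$ its actual arrival. Then $|t(e)-\hat t(e)|\le\eta$. The claim is that if the SCC containing $u,v$ forms in $G_t$, then every edge that is used forms in the predicted graph at predicted time at most $t+\eta$. Hence the true SCCs of $G_t$ are a coarsening of the predicted SCCs at time $t-\eta$ and a refinement of those at time $t+\eta$. The first half holds because every edge of $\pred$ with index at most $t-\eta$ has already arrived by time $t$; the second half holds because every arrived edge has predicted index at most $t+\eta$.

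\textbf{Online algorithm.} We maintain the condensation $C_t$ of $G_t$, with a union-find structure giving $O(1)$ SCC queries (keeping an explicit label array and relabeling the smaller side, or using the merge forest directly). At time $t$ we eagerly contract all predicted SCCs at time $t-\eta$, which are guaranteed to be correct. When a new edge $(u,v)$ arrives, any new SCC created must lie inside a predicted component at time $t+\eta$. We therefore run Tarjan only on the subgraph of $C_t$ induced by the current components contained in the predicted component $P$ of $u$ at time $t+\eta$, or detect that $u$ and $v$ lie in different components of $P$, in which case no merge can occur.

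\textbf{Accounting.} The main obstacle is bounding the total work by $O(m\log m\cdot\eta)$. My plan is a charging argument. A Tarjan call at time $t$ touches only edges whose predicted time lies in $(t-\eta,t+\eta]$ inside $P$: every edge with smaller predicted index is already contracted, and every edge with larger index has not yet arrived. So each edge is touched only during the $O(\eta)$ insertions whose times fall in its window. This gives $O(m\eta)$ Tarjan work, plus $O(m\log m)$ for preprocessing and merging. If the window sizes are not tight, I would instead organize the online searches along the divide-and-conquer recursion tree. For each node, there are $O(\eta/\text{length})$ or $O(1)$ neighboring intervals whose edge sets may interleave, which charges $O(\eta)$ per edge per level and gives $O(m\log m\cdot\eta)$. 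Robustness for $\eta=m$ is immediate, and $\max\{\eta,1\}$ covers $\eta=0$, where the algorithm replays the offline solution.
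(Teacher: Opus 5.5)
\textbf{Verdict.} There is a genuine gap in the accounting. Your sandwich claim is correct; it uses only $|t(e)-\hat t(e)|\le\eta$ for the original $\hat\sigma$. But the accounting, which is the whole content of the theorem, rests on a false step.

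\textbf{First gap: the window is on the wrong quantity.} You index each edge's window by its predicted \emph{arrival} index and assert that every edge with predicted index at most $t-\eta$ is ``already contracted.'' Eager contraction of the predicted SCCs at time $t-\eta$ only removes edges whose \emph{endpoints} are strongly connected in the predicted graph at $t-\eta$. For example, let $(a,b)$ be predicted and arrive at time $1$, and let $(b,a)$ be predicted and arrive at time $T\gg\eta$. Then the Tarjan call at step $T$ touches $(a,b)$, although its predicted index is far outside $(T-\eta,T+\eta]$. The window must be placed on the predicted \emph{combining} time $\hat t(a,b)$ of the edge's endpoints (the paper's $C_{a,b}(\hat\sigma)$). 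Your two inclusions give exactly this:
\begin{itemize}
\item if $\hat t(a,b)\le t-1-\eta$, then $a$ and $b$ are already merged in $G_{t-1}$;
\item if $(a,b)$ lies on a cycle of $G_t$, then $a,b$ are strongly connected in the predicted graph at $t+\eta$, so $\hat t(a,b)\le t+\eta$.
\end{itemize}
So only edges with $\hat t(a,b)\in[t-\eta,t+\eta]$ matter at step $t$, and $O(\eta)$ touches per edge survives, but for a different reason than you give.

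\textbf{Second gap: the cost of the Tarjan call is not bounded.} You do not say how to ``run Tarjan on the subgraph of $C_t$ induced by the components inside $P$'' cheaply. Enumerating those components via the vertices of $P$ can cost $\Theta(m)$ per step. So can scanning adjacency lists and discarding edges that leave $P$: a vertex with many out-edges whose heads never join its SCC gets rescanned at every step, and your charging never pays for this.

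\textbf{Repair.} Restrict the edge lists themselves:
\begin{itemize}
\item keep an arrived edge $(a,b)$ in the lists of its current endpoint components only during steps $[\hat t(a,b)-\eta,\hat t(a,b)+\eta]$, using time buckets;
\item never insert edges whose endpoints never merge; since $\hat\sigma$ is a permutation of $\sigma$, these lie on no cycle ever;
\item when $e_t=(u,v)$ joins two different components of $G_{t-1}$, search from $v$ over active edges and merge the components that reach $u$.
\end{itemize}
The eager contraction becomes redundant, because the maintained partition is exact. Each edge is then scanned $O(1)$ times per step of its window. This gives $O(m\log m+m\eta)$ total, which implies the theorem, and your vague fallback along the recursion tree becomes unnecessary.

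\textbf{Comparison with the paper.} The repaired route is genuinely different from the paper's. The paper keeps only the root-to-$t$ path of the recursion tree for the \emph{updated} prediction $\hat\sigma_t$ and rebuilds from the LCA of $t-1$ and $\hat t$. It charges $O(\eta)$ builds per edge per level, because each subproblem is rebuilt $O(\eta)$ times and combining times drift by at most $2\eta$. That costs a $\log m$ factor but never needs to know $\eta$. Yours uses $\eta$ explicitly, so you must assume it is given or add a guess-and-double scheme triggered by detected violations of the two inclusions.
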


To put this result in context, notice that if the edge arrivals are reasonably predictable (e.g. $\eta = O(\text{polylog}(m)))$, our algorithm for the online incremental problem is nearly as fast as the offline variant.

The algorithmic idea is to use a divide-and-conquer approach on the predicted edge sequence $\hat \sigma$ to recursively precompute solutions to partial subproblems using Tarjan's algorithm.  As edges arrive that are different from $\hat \sigma$, our algorithm updates the prediction and lazily recomputes the affected subproblems. At a high level, we show that we can bound the number of times each subproblem is recomputed by the prediction error $\eta$ and thus lose an $\eta$ factor over the offline incremental algorithm.   

\subparagraph*{Robustness to Error.}
The running time of our algorithm grows with $\eta$---in fact, for sufficiently high $\eta$, our algorithm may be slower than known worst-case techniques.  
However, the algorithm can be made robust to arbitrarily erroneous predictions with respect to any worst-case algorithm by switching to the worst-case
algorithm if the learned algorithm’s runtime grows larger than the worst-case guarantee.

Similarly, our algorithm's running time is given in terms of the maximum error---as stated, even a single outlier can significantly impact performance guarantees.  One potential strategy to handle a small number of outliers is to reset the algorithm (restarting from time zero with the outlier predicted correctly) each time an outlier arrives. Each reset is as expensive as an entire algorithm run, but between resets, performance is proportional to the maximum error among non-outliers. While this strategy is expensive on a per-outlier basis, it can be used to handle a small number of outliers without degrading the performance too much. In our experiments,  our algorithm (without this intervention)  seems to be somewhat resistant to outliers---its performance on our datasets is better than the maximum error would indicate.

\subparagraph*{Comparison with Worst-Case Approaches.}  Most prior algorithms for maintaining SCCs in incremental graphs do so by maintaining \emph{topological ordering} of the \emph{condensed graph} (a directed acyclic graph in which all vertices within an SCC are contracted into a new single node). 
A topological ordering is a labeling $L: V \rightarrow \mathbb{Z}$ of the vertices $V$ such that $L(v) < L(u)$ if there is a directed path from $v$ to $u$. A topological ordering exists if and only if the directed graph is acyclic. 
An algorithm that maintains the topological ordering can detect new cycles whenever a new edge points from a node with a higher label to a lower label.  Such a cycle leads to the merging of previous (condensed) SCCs. 

The best-known worst case total running times of algorithms using the topological-ordering-based combinatorial approach are $\tilde{O}(n^2)$~\cite{bender2015new} for dense graphs and $\tilde{O}(m^{4/3})$~\cite{bernstein2021incremental} for sparse graphs. Note that the $\tilde{O}$-notation hides logarithmic factors. 
Thus, even with reasonably large error as long as $\eta < \min\{n^2/m, m^{1/3}\}$, our algorithm improves upon these approaches.

In a recent theoretical advancement, Chen et al.~\cite{chen2024almost} use the interior-point method to maintain SCCs in incremental graphs (without a topological ordering) in total time $m \cdot e^{O(\log^{167/168}m \log \log m}$).   
While the improvement is theoretically appealing, the algorithm and techniques employed are not practical and the running time itself is galactic.

Faster algorithms are known for maintaining SCCs in the \emph{decremental} setting (where all edges are available at the beginning and deleted one by one).  This can be done in total expected time ${O}(m \log^4 m)$~\cite{bernstein2019decremental}.  Predictions of the edge sequence essentially help turn an incremental SCC problem (which seems to be harder) into a decremental one. 

In contrast to all prior work, our algorithm maintains SCCs without maintaining a topological ordering or using complex techniques.  Our algorithm exploits predictions to precompute partial solutions to recursive subproblems.  Our recursive technique that uses predictions to lift the offline problem to an online one follows the approach used by McCauley et al.~\cite{McCauleyMoNi25} for maintaining approximate shortest paths in incremental graphs using predictions.  

One way our paper differs from the previous works on dynamic graph problems with predictions~\cite{BrandFNP24, liu2023predicted, McCauleyMoNi25, HenzingerSSY24} is that we co-designed the algorithm with its implementation.  Concretely, our analysis explicitly tracks logarithmic factors in running time, and avoids constant-factor overheads when possible (e.g.\ rebuilding unnecessary subproblems as in~\cite{McCauleyMoNi25}).  

\subparagraph*{Experimental Results.}
We compare the runtime of our algorithm to the state-of-the-art heuristic $\text{IncSCC}^+$~\cite{fan2017incremental} on real datasets\footnote{We ended up optimizing $\text{IncSCC}^+$ beyond what was intended in~\cite{fan2017incremental}. We include runtime comparison against both variants.}. $\text{IncSCC}^+$ maintains a {topological ordering} of the {condensed graph} and optimizes over running Tarjan's from scratch repeatedly. Despite the optimizations, their worst-case total cost is $\Theta(m^2\log m)$ for $m$ inserts. 

Our experiments show that (1) our algorithm is scalable and can handle large datasets, (2) it outperforms the known recursive algorithm for the offline problem under perfect predictions, (3) it significantly outperforms $\text{IncSCC}^+$ on reasonably accurate predictions,  and finally (4) its runtime degrades gradually with error.  These results complement our theoretical analysis: they give evidence that the algorithm is reasonably simple, and its running time does not involve large constants.

\subparagraph*{Summary.} In summary, our theoretical and experimental results demonstrate that predictions can help bridge the gap between theoretically-good algorithms and practical heuristics.  Our results provide a proof-of-concept for designing algorithms with strong provable guarantees that can match or even outperform state-of-the-art heuristics on predictable inputs. 

\subsection{Additional Related Work}
\subparagraph*{Algorithms with Predictions For Runtime.}
Kraska et al.~\cite{KraskaBCDP18} jump-started the area by \emph{empirically} demonstrating how machine-learned predictions can speed up data structures.  Dinitz et al.~\cite{DinitzILMV21} present a \emph{theoretical} framework for analyzing predictions for speeding up \emph{offline} algorithms.  This has since been used for problems such as maximum flow~\cite{DaviesMVW,POLAK2024106487}, shortest path~\cite{LattanziSV23}, minimum cut~\cite{niaparast2025faster}, stable matching~\cite{mccauley2026stable}, and convex optimization~\cite{SakaueO22}. 

Predictions have been used to design provably faster data structures for problems such as online list labeling~\cite{McCauleyMNS23}, incremental topological ordering ~\cite{mccauley2024incremental}, binary search~\cite{dinitz2024binary}, sorting~\cite{bai2023sorting}, dictionaries~\cite{zeynali2024robust,lin2022learning}, and priority queries~\cite{benomar2024learningaugmented}.  Similar to this paper, predictions have been used to ``lift'' offline solutions to dynamic graph problems to improve the runtime of incremental graph problems, such as incremental shortest paths~\cite{BrandFNP24,HenzingerSSY24,McCauleyMoNi25} and divide-and-conquer algorithms~\cite{liu2023predicted}. 
We note in particular that our divide-and-conquer framework is similar to that of McCauley et al.~\cite{McCauleyMoNi25}, however, we keep only a single path rather than maintaining the entire tree.  This improves runtime performance by several log factors and also makes the data structure more practical.
Liu et al.~\cite{liu2023predicted} also use a divide-and-conquer framework, but their analysis requires balanced subproblem sizes.  In contrast, we can handle unbalanced subproblems with the trade-off that our error bounds are on max error (rather than average).
Overall, these results demonstrate significant potential of leveraging machine-learned predictions to speed up algorithms and data structures.  
\section{Preliminaries}\label{sec:prelim}
This section presents the formal setup and a warm-up algorithm for the offline incremental SCC problem.

\subsection{Model and Notation} 
\label{sec:model}

Let $V$ be the set of vertices of the directed graph $G$.  
Let $\sigma = e_1, \ldots, e_m$ denote the sequence of online edge inserts. 
We assume that one edge arrives per time step, so we often say that $e_t$ arrives at ``time $t$.''
We use ``graph at time $t$'', denoted by $G_t$, to refer to the graph with vertex set $V$ and the first $t$ edges $e_1, \ldots, e_t$. We define $G_0$ as the graph on vertex set $V$ with no edges.
We assume that $G_m$ has no isolated vertices; if it is not then all isolated vertices in $G_m$ can be removed before running the algorithm, increasing the cost of the algorithm by $|V|$.
Before any edges arrive, the learned algorithm receives a prediction $\hat{\sigma}$ of $\sigma$.  We define $m = |\sigma| = |\hat{\sigma}|$.  
Let $\ind(e)$ be the index of $e$ in $\sigma$ and $\widehat{\ind}(e)$ be the index of $e$ in $\hat{\sigma}$.
Let $\eta_e = |\ind(e) - \widehat{\ind}(e)|$ for each edge $e$ in $\sigma$.
The error $ \eta = \max_{e \in \sigma} \eta_e $ is the maximum error of any edge; for simplicity of exposition, we assume $\eta \geq 1$ (the analysis is asymptotically the same if $\eta = 0$ or $\eta = 1$).
As we go through the edges, we update the predictions, and we denote the predicted arrival sequence at time $t$ by $\hat{\sigma}_t$.
This is the same prediction model as past work on incremental graph algorithms~\cite{HenzingerSSY24,BrandFNP24,McCauleyMoNi25}. 

We assume that the set of edges in $\pred$ and $\sigma$ are the same---that is, $\pred$ is a permutation of $\sigma$.  
It is possible to generalize this assumption, albeit with some extra cost.  If an edge $e$ arrives that is not in $\pred$, the algorithm can simply start over from the beginning, at a cost of $O(m\log m)$---this does not increase the overall cost if the number of such edges is at most $\eta$.  
Edges in $\pred$ that never arrive are more expensive, since effectively $\eta_e = m$. One strategy to handle such edges is to use a large threshold $\tau$: at each time step $t$, the algorithm can remove any edge that was predicted to arrive before $t - \tau$; if the edge later arrives it can be treated as an edge that is not in $\pred$. 

Our goal is to maintain a data structure that, after edge $e_t$ arrives, can answer queries to determine if $u$ and $v$ are in the same SCC of $G_t$, for any $u,v \in V$.  

\subsection{Warm Up: Offline Incremental SCC}\label{sec:warmup}

Our algorithm is based closely on the offline recursive algorithm for the problem.
For the offline incremental problem, the entire sequence of edge insertions $\sigma$ is known ahead of time.  The goal is to quickly process $\sigma$ to allow us to answer SCC queries for $G_t$ for any $t$.    
We describe a folklore recursive method to perform this preprocessing efficiently.  This algorithm has been described in programming contest literature, see e.g.,~\cite{sccblog}, and a similar structure has been used for designing a worst-case algorithm for decremental SCC~\cite{roditty2013decremental}.

The algorithm recursively divides the interval $[0,m]$ into halves.  Each interval generated in this process corresponds to a \defn{subproblem}. 
In particular, the resulting intervals form a hierarchical set of dyadic subintervals (for ease of exposition, we assume that 
$m$ is a power of two, although this assumption is not necessary for the algorithm or its correctness). 
Each subproblem has two endpoints $\ell$ and $r$.
We refer to a subproblem either by its endpoints, $[\ell, r]$, or by its midpoint, $x := (r + \ell)/2$. The subproblem  $[\ell, r]$ has a \defn{left child} $[\ell, x]$ and a \defn{right child} $[x, r]$, and is referred to as their \defn{parent} subproblem. See Figure~\ref{fig:tree} for a schematic view of the recursion tree.

For each interval $[\ell, r]$, the algorithm maintains a graph $\tilde{G}_x = (V_x, E_x)$.  
$\tilde{G}_x$ is defined recursively below, but to help motivate the definition let us briefly summarize its contents.  Consider contracting all SCCs of $G_{\ell}$ into single vertices, retaining only edges between two distinct SCCs.  These vertices and edges are enough to determine what SCCs are created during $[\ell, r]$.  Now, consider trimming the graph further: we 
remove all edges that do not contribute to forming a new SCC during $[\ell, r]$, and 
remove all isolated vertices from the graph; trimming the graph in this way still allows us to determine the correct SCCs during $[\ell, r]$.  This is the idea behind $\tilde{G}_x$.

At the root of the recursion tree, $\ell = 0$, $r = m$ and $\G_{x} = G_{m/2}$.  The algorithm runs Tarjan's algorithm~\cite{tarjan1972depth} on $G_x$ to compute its SCCs. It uses the SCCs to compute the graphs for the left and right subproblems $\tilde{G}_{x}^{\text{left}}$ and $\G_{x}^{\text{right}}$  respectively.   We define $\tilde{G}_x^{\text{left}}$ using its vertex set $V_x^{\text{left}}$ and edge set $E_x^{\text{left}}$; $\tilde{G}_x^{\text{right}}$ is defined likewise.

\subparagraph*{Right Subproblem.} Consider a pair of vertices $u$ and $v$ in the same strongly connected component $S$ of $\G_{x}$.  Since edges are only ever inserted, $u$ and $v$ will remain in the same SCC at all time steps after $x$.  Thus, to construct the right subproblem, the algorithm contracts all nodes within the same SCC in $\G_x$ into a single node and deletes all edges $(u,v)$ such that $u$ and $v$ are within the same SCC.  In particular, for the right subproblem corresponding to the interval $[x, r]$, it sets $V_{x}^{\text{right}}$ to contain a contracted node for each strongly connected component in $\G_{x}$. The edge set $E_{x}^{\text{right}}$ includes all edges $(u,v) \in E_x$ such that $(u,v)$ is an \emph{inter SCC} edge with endpoints $u$ and $v$ in different SCCs in $\G_{x}$. 

\subparagraph*{Left Subproblem.}  Now consider an edge $(u,v)$ in $\G_{x}$ with end points in different SCCs of $\G_{x}$.  Then, the vertices $u$ and $v$ will be in different SCCs at all time steps before $x$. The algorithm can safely delete such edges when recursing on the left subproblems.  That is, $V_{x}^{\text{left}}$ contains all vertices in $V_{x}$ with at least one incident edge, and $E_{x}^{\text{left}}$ only contains the \emph{intra-SCC} edges in $E_{x}$ with both end points in the same SCC.  

\subparagraph*{Recursion Tree.}
The algorithm recurses on the left and right subproblems until the interval $[\ell, r]$ has length two. This way, each time $t=1,\ldots,m-1$ is the midpoint of exactly one subproblem in the recursion tree.
Notice that the nodes of this recursion tree in an in-order traversal correspond to the sequence of graphs $G_0, \ldots, G_m$; see Figure~\ref{fig:tree}.

\subparagraph*{Queries.}
The folklore solution~\cite{sccblog} builds a graph to implicitly track SCC merges. Here we describe one simple way to handle the queries recursively instead.  

To see if $u$ and $v$ are in the same SCC at time $t$, we begin at the root and find the descendant of the root containing $t$. If it is a right child, $u$ and $v$ may each be merged into a new vertex.  If so, we replace $u$ (resp. $v$) with the merged vertex, and recurse.  When we reach a subproblem with midpoint $t$, we check if $u$ and $v$ are in the same SCC in $\G_t$, and return the result. This takes $O(\log m)$ time. 

In Section~\ref{sec:nodelabel}, we present a data structure for the online problem that explicitly maintains node labels over time and answers queries in $O(1)$ time.

\subparagraph*{Analysis.} Each time the algorithm processes a subproblem, it recurses on two subproblems obtained by splitting the time interval into two halves, implying a recursion depth of $O(\log m)$. Running Tarjan's algorithm on a subproblem with $s$ edges takes $O(s)$ time.  At each recursive call, an edge is either an inter- or intra-SCC edge, and thus is only passed down to one of the left or right subproblems. Thus, each edge contributes to one subproblem per level and the total runtime is $O(m)$ per level of the recursion tree. The total insertion cost is thus $O(m\log m)$.

\begin{figure}[t]
\centering
\begin{tikzpicture}[
    x=0.62cm, y=0.8cm,
    levelbox/.style={draw=black, rounded corners=2pt, minimum height=0.6cm},
    slot/.style={draw=black, minimum width=0.62cm, minimum height=0.55cm, inner sep=0pt},
    >=stealth
]

    % level 0
    \draw[levelbox] (0,3) rectangle (16,3.6);
    \node (L0) at (8,3.3) {\scriptsize $[0,16]$};

    % level 1
    \draw[levelbox] (0,2) rectangle (8,2.6);
    \draw[levelbox] (8,2) rectangle (16,2.6);
    \node (L10) at (4,2.3) {\scriptsize $[0,8]$};
    \node (L11) at (12,2.3) {\scriptsize $[8,16]$};

    % level 2
    \draw[levelbox] (0,1) rectangle (4,1.6);
    \draw[levelbox] (4,1) rectangle (8,1.6);
    \draw[levelbox] (8,1) rectangle (12,1.6);
    \draw[levelbox] (12,1) rectangle (16,1.6);
    \node (L20) at (2,1.3) {\scriptsize $[0,4]$};
    \node (L21) at (6,1.3) {\scriptsize $[4,8]$};
    \node (L22) at (10,1.3) {\scriptsize $[8,12]$};
    \node (L23) at (14,1.3) {\scriptsize $[12,16]$};

    % level 3
    \draw[levelbox] (0,0) rectangle (2,0.6);
    \draw[levelbox] (2,0) rectangle (4,0.6);
    \draw[levelbox] (4,0) rectangle (6,0.6);
    \draw[levelbox] (6,0) rectangle (8,0.6);
    \draw[levelbox] (8,0) rectangle (10,0.6);
    \draw[levelbox] (10,0) rectangle (12,0.6);
    \draw[levelbox] (12,0) rectangle (14,0.6);
    \draw[levelbox] (14,0) rectangle (16,0.6);
    \node (L30) at (1,0.3) {\scriptsize $[0,2]$};
    \node (L31) at (3,0.3) {\scriptsize $[2,4]$};
    \node (L32) at (5,0.3) {\scriptsize $[4,6]$};
    \node (L33) at (7,0.3) {\scriptsize $[6,8]$};
    \node (L34) at (9,0.3) {\scriptsize $[8,10]$};
    \node (L35) at (11,0.3) {\scriptsize $[10,12]$};
    \node (L36) at (13,0.3) {\scriptsize $[12,14]$};
    \node (L37) at (15,0.3) {\scriptsize $[14,16]$};

    % edges level 0 -> 1
    \draw (L0.south) -- (L10.north);
    \draw (L0.south) -- (L11.north);

    % edges level 1 -> 2
    \draw (L10.south) -- (L20.north);
    \draw (L10.south) -- (L21.north);
    \draw (L11.south) -- (L22.north);
    \draw (L11.south) -- (L23.north);

    % edges level 2 -> 3
    \draw (L20.south) -- (L30.north);
    \draw (L20.south) -- (L31.north);
    \draw (L21.south) -- (L32.north);
    \draw (L21.south) -- (L33.north);
    \draw (L22.south) -- (L34.north);
    \draw (L22.south) -- (L35.north);
    \draw (L23.south) -- (L36.north);
    \draw (L23.south) -- (L37.north);
\end{tikzpicture}
\caption{Recursive tree of subproblems for $m=16$.}
\label{fig:tree}
\end{figure}
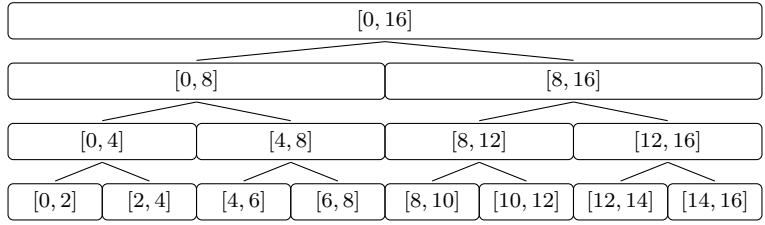
\section{Incremental SCC with Predictions}\label{sec:main_algorithm}\label{sec:algo}
In this section, we describe how we adapt the offline recursive algorithm for the incremental SCC problem with predictions. That is, given a prediction $\hat{\sigma}$ of $\sigma$ at the beginning, we describe how to create a data structure that at each time step $t$, efficiently: (1)  inserts $e_t$ (the $t$-th edge in $\sigma$), and (2) allows answers to SCC queries on $G_t$.

First, notice that if the prediction sequence $\hat{\sigma}$ is \emph{perfect} ($\hat{\sigma} = \sigma$), then the online problem reduces to the offline incremental problem.  However, in general, the prediction $\hat{\sigma}$ could differ from $\sigma$ significantly, and naively redoing the offline recursive approach each time is too expensive. 
Instead, our algorithm \emph{only maintains a single path of the recursion tree} from Section~\ref{sec:warmup}; when each edge arrives, the algorithm updates the path, and updates the SCC of each vertex.  In particular, initially, the algorithm only computes the root-to-left path from subproblem $[0,m]$ to subproblem $[0,0]$ in the offline recursion tree built using the prediction $\hat{\sigma}$.  As edges arrive, it updates the predicted sequence $\hat{\sigma}$ and recomputes the current root-to-leaf path for time step $t$.  In our analysis, we show that we only incur recomputations that we can directly charge to the error in the prediction, that is, the algorithm is only $O(\eta)$-factor worse than the offline incremental algorithm. 

\subsection{Learned Incremental SCC Algorithm}\label{sec:main_incscc}

First, we describe how our algorithm, \textbf{Learned IncSCC}, maintains and uses the prediction.

Our algorithm continuously updates $\hat{\sigma}$ as edges arrive. Let $\hat{\sigma}_t$ denote the updated prediction after $t$ edges have arrived and $\hat{\sigma}_0 = \hat{\sigma}$. Thus $\hat{\sigma}_t$ agrees with $\sigma$ on the first $t$ edges. 
Let $t$ and $\hat{t}$ denote the arrival time step of an edge $e_t$ in $\sigma$ and $\hat{\sigma}_{t-1}$ respectively.
If $\hat{\sigma}_{t-1}$ was correct, then $\hat{t} = t$.  Otherwise, we update $\hat{\sigma}_{t-1}$ to obtain $\hat{\sigma}_t$: we delete $e$ from time $\hat{t}$, and insert it to time $t$.  All edges in $\hat{\sigma}_{t-1}$ between $t$ and $\hat{t}$ are shifted one time step later (note that $t'>t$); this gives $\hat{\sigma}_{t}$.  

Now, we discuss how subproblems are maintained.
At a high level, instead of building the entire recursive tree of the offline algorithm from Section~\ref{sec:warmup} with respect to $\hat{\sigma}_0$, Learned IncSCC is more conservative.  It lazily maintains the only thing needed to answer queries about the SCCs at the current timestep $t$:  the path from the root to $t$ in the recursive tree that corresponds to the updated prediction $\hat{\sigma_t}$.

Before any edges arrive, the algorithm uses the prediction $\hat{\sigma}_0$ and the offline recursive approach to \textbf{build} all subproblems along the path from the root $[0,m]$ to the leaf subproblem $[0,0]$.  We describe the implementation of this build in more detail momentarily.

Now consider step $t$ when edge $e_t$ is about to arrive.  Let $\hat{t}$ be
the time $e_t$ was predicted to arrive in $\hat{\sigma}_{t-1}$.  Before $e_t$ arrives, the algorithm maintains the path from the root to time step $t-1$ of the recursion tree corresponding to $\hat{\sigma}_{t-1}$.  
After $e_t$ arrives, the algorithm updates $\hat{\sigma_t}$ and the root-to-leaf path it maintains as follows.  The algorithm begins with the last subproblem in the current path, and walks backwards through the path to find a subproblem with interval $[\ell, r]$ such that $\hat{t}\in [\ell, r]$. Thus, $[\ell, r]$ is the lowest common ancestor (LCA) of the nodes $t-1$ and $\hat{t}$ in the recursion tree with respect to $\hat{\sigma}_{t-1}$. The algorithm calls the build procedure on this subproblem (and thus recursively on its children) to compute the path from the root to $t$ with respect to $\hat{\sigma}_{t}$.

\subparagraph*{What We Store.}
Our algorithm maintains a path from the root to the current time $t$, implemented as an array of at most $\lceil \log_2 m\rceil$ pointers to subproblems.  

Our algorithm maintains the following for each subproblem $[\ell, r]$ with midpoint $x$: (1) a graph $\G_x = (V_x, E_x)$ whose nodes are contracted nodes of $V$, 
(2) the SCCs of $\G_x$,
(3) a mapping $M_x$ 
from each vertex in $\G_x$ to its \defn{representative vertex} in $V$, and (4) a set of \defn{left edges} and \defn{right edges} that correspond to the edges of the left and right subproblems $E_{x}^{\text{left}}$ and $E_x^{\text{right}}$ as defined in Section~\ref{sec:warmup}.  Each subproblem stores the edges of its children so that the edges do not need to be recomputed when the children are rebuilt.

Finally, our algorithm maintains the {Node Label Data Structure} which stores at all times $t$, a \textbf{label} $L(v)$ for each vertex $v \in V$, such that $L(u) = L(v)$ at time $t$ if and only if $u$ and $v$ are in the same SCC in $G_t$.  
We use this data structure as a blackbox below and defer its implementation to Section~\ref{sec:nodelabel}.

\subparagraph*{Building a Subproblem.}
Below we describe how the algorithm performs a \defn{build} operation at time $t$ on an interval $[\ell, r]$ with midpoint $x$ with respect to the predicted sequence $\hat{\sigma}_t$.  

The algorithm begins by calculating the graph $\G_x$ and the mapping $M_x$.  We split into two cases depending on if $[\ell, r]$ is the root; in the case it is not, we split into further cases depending on if it is a left or right child.

 If $[\ell, r]$ is the root subproblem, then $\G_x = (V_x, E_x)$ where $V_x = V$ and $E_x$ contains all edges in $\hat{\sigma}_{t}$.   The mapping $M_x$ has $M_x(v_i) = v_i$ for all vertices $v_i$.

If $[\ell, r]$ is not the root, let $[\ell^p, r^p]$ denote its current parent node.  Let $\G_x^p = (V_x^p, E_x^p)$ and $M_x^p$ denote the graph and mapping stored by the parent subproblem. There are two cases depending on whether the interval $[\ell, r]$ is the left or right child of $[\ell^p, r^p]$.

If $[\ell, r]$ is the left child of $[\ell^p, r^p]$,  then $V_x$ is the set of vertices in $V_x^p$ with at least one incident edge, i.e., incoming or outgoing, and the edge set $E_x$ is the left edges stored in the parent subproblem $\G_{x}^p$.  The mapping is inherited from the parent, that is, $M_x = M_{x}^p$.   

If $[\ell, r]$ is the right child of $[\ell^p, r^p]$, the algorithm contracts the vertices in the same SCC of $\G_x^p$ to a single node, and these contracted nodes form $V_x$. The edge set $E_x$ is the right edges stored in the parent subproblem $\G_{x}^p$. For each node $c \in V_x$, which corresponds to an SCC in $\G_x^p$, we compute $M_x(c)$ as follows. Let $v$ be an arbitrary node in the SCC corresponding to $c$. We set $M_x(c):=M_x^p(v)$.

Now that $\G_x$ and $M_x$ are computed, the algorithm uses them to finish processing the subproblem.
Similar to the offline case, the algorithm computes the strongly connected components of $\G_x$ using Tarjan's algorithm, considering only the edges in $E_x$ that appear at timesteps at most $x$ according to $\hat{\sigma}_t$.

First, consider the case when the midpoint $x\neq t$.
The algorithm computes the left edges of $\G_x$ as those whose endpoints lie in the same SCC in $\G_x$, and the right edges as those whose endpoints lie in different SCCs.  
These left and right edges are stored in the parent subproblem and inherited by child subproblems (rather than being recomputed when a child is rebuilt).  After computing the left and right edges, the algorithm recursively builds the child containing $t$.

The base case occurs when the midpoint $x = t$. In the base case, the algorithm performs a final merge of the labels of nodes in the same SCC of $\G_x$ in the Node Label Data Structure; we define this merge below.

\subsection{Node Label Data Structure}\label{sec:nodelabel}
The \emph{Node Label Data Structure}  stores, at all times $t$, the SCC of each vertex in $G_t$. It consists of: (1) an array $L$ of size $n$, storing a \emph{label} for each vertex such that $L[i] = L[j]$ if and only if vertices $v_i$ and $v_j$ are in the same SCC in $G_t$; and (2) for each label $\ell$, a doubly linked list of all vertices with label $\ell$ (in other words, a list of all vertices $v_i$ such that $L[i] = \ell$).  

At initialization, $L[i] = i$ for all $i \in [n]$.  

On a query to see if vertices $v_i$ and $v_j$ are in the same SCC, we simply check if $L[i] = L[j]$.  This leads to $O(1)$ query time.  

Given a collection of vertices $V_c$, the \textbf{merge} operation merges the SCCs of all vertices in $V_c$ into a single SCC.  Let $V_c$ consist of $k$ vertices $v_{c_1}, v_{c_2}, \ldots, v_{c_k}$. For $i = 1, \ldots, k-1$, we find labels $L[c_i]$ and $L[c_{i+1}]$, and determine which has a smaller list.  
Let $\ell_1$ be the label of the smaller list (breaking ties arbitrarily) and $\ell_2$ be the other.  We iterate through the list of all vertices with label $\ell_1$; for each vertex $v_j$, we set $L[j] = \ell_2$.  Then, we add all vertices with label $\ell_1$ to the list of vertices with label $\ell_2$.
Since the lists are implemented as linked lists, each merge operation takes time proportional to the number of vertices with label $\ell_1$. 
\section{Running Time Analysis}\label{sec:analysis}
In this section, we analyze the running time of our learned algorithm.  We defer proofs related to correctness to the full version of the paper. 

 To start, we bound what edges in the predicted edge sequence $\hat{\sigma}$ can cause a given subproblem to be rebuilt.

\begin{lemma}
\label{lem:rebuild_edges}
     If a subproblem $[\ell, r]$ with midpoint $x$ using prediction $\pred$ is rebuilt at time $t$, then the edge $e$ that arrives at time $t$ was predicted to arrive in $\hat{\sigma}$ at a time either in $[x, x + \eta]$ or in $[r, r + \eta]$.
 \end{lemma}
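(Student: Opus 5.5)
The plan is to first pin down exactly when the algorithm touches $[\ell,r]$ at time $t$, and then translate that event into a condition on the predicted position of the arriving edge $e=e_t$.

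\textbf{Step 1: when $[\ell,r]$ is touched.} By construction, at time $t$ the algorithm finds the lowest common ancestor $A$ of $t-1$ and $\hat t$ in the recursion tree. Here $\hat t$ is the position of $e$ in $\hat\sigma_{t-1}$. The algorithm then rebuilds exactly the subproblems on the path from $A$ down to $t$. So if $[\ell,r]$ is rebuilt, it lies on the new root-to-$t$ path, hence $\ell< t\le r$, and some ancestor-or-self of $[\ell,r]$ contains both $t-1$ and $\hat t$.

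I would split into two cases. Either $\hat t\in[\ell,r]$, which happens when $[\ell,r]$ is $A$ itself or lies on the path from $A$ toward $\hat t$. Or $\hat t>r$, which happens when $[\ell,r]$ is a strict descendant of $A$ on the side of $t$. In the first case the edge $e$ is being moved from $\hat t$ to $t\le x$ across the midpoint, or already sits in $[x,r]$; the relevant condition is $\hat t\ge x$. In the second case the edge moves from beyond $r$ to inside $[\ell,r]$, so it crosses the right endpoint and $\hat t\ge r$. If instead $\hat t<x$ and $t\le x$ with $[\ell,r]=A$, the midpoint is not crossed. I would argue, using the in-order correspondence of Section~\ref{sec:warmup}, that $A$ then cannot have midpoint $x$ with both $t-1$ and $\hat t$ on the same side, contradicting that it is the LCA. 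This case analysis yields $\hat t\in[x,r]$ or $\hat t\ge r$, together with $t\le x$ or $t\le r$ respectively.

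\textbf{Step 2: upper endpoints.} The upper endpoints come from the error bound. Since $i(e)=t$, we have $\widehat{i}(e)\le t+\eta$. In the first case this gives $\widehat i(e)\le x+\eta$, and in the second $\widehat i(e)\le r+\eta$.

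\textbf{Step 3: lower endpoints (the main obstacle).} This requires relating $\hat t$, the position in the updated sequence $\hat\sigma_{t-1}$, to $\widehat i(e)$, the position in the original $\hat\sigma$. I would prove the invariant that the not-yet-arrived edges keep their original relative order in every $\hat\sigma_{t-1}$, because each update only extracts an arrived edge and shifts a block. Consequently
\[
\hat t \;=\; t-1+\bigl|\{f \text{ unarrived}: \widehat i(f)\le \widehat i(e)\}\bigr|,
\]
and therefore
\[
\hat t-\widehat i(e) \;=\; \bigl|\{f \text{ arrived before } t: \widehat i(f)>\widehat i(e)\}\bigr| \;\ge\; 0.
\]
From this I would try to show that the crossing event already forces $\widehat i(e)\ge x$ (respectively $\ge r$). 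The first attempt is to rewrite the LCA condition directly in terms of original positions, using that positions of unarrived edges shift monotonically. If that fails, I would state the lemma relative to the prediction current at the moment $[\ell,r]$ was last built. That is the form actually used when charging rebuilds to the $O(\eta)$ window of predicted times around $x$ and $r$.
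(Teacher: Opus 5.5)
\textbf{There is a genuine gap.} Your Step 1 is fine and matches the paper's case analysis. Let $\hat t$ be the position of $e$ in $\hat\sigma_{t-1}$. If $[\ell,r]$ is the LCA, then $t\le x\le \hat t$. If it lies strictly below the LCA on the path to $t$, then $t\le r\le \hat t$.

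The gap is in Steps 2--3. There you switch to $\widehat{i}(e)$, the position of $e$ in the \emph{original} $\hat\sigma_0$, and try to prove $\widehat{i}(e)\ge x$ (resp.\ $\ge r$). Your own identity gives $\hat t\ge \widehat{i}(e)$, which is the wrong direction. In fact this reading of the lemma is false. Take $m=8$, $\hat\sigma_0=(a,b,c,d,f,g,h,k)$ and $\sigma=(c,d,f,b,a,g,h,k)$, so $\eta=4$:
\begin{itemize}
\item Then $\hat\sigma_3=(c,d,f,a,b,g,h,k)$, and $b$ arrives at $t=4$ with $\hat t=5$.
\item The root ($x=4$, $r=8$) is the LCA and must be rebuilt, since its first four edges change from $\{c,d,f,a\}$ to $\{c,d,f,b\}$.
\item Yet $\widehat{i}(b)=2\notin[4,8]\cup[8,12]$.
\end{itemize}
So Step 3 cannot be completed. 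Your fallback, measuring against the prediction at the last build of $[\ell,r]$, is a different statement, and you leave it unproved.

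\textbf{The fix.} The missing idea is simply to stay with $\hat t$. The lemma, as the paper proves it, refers to the current prediction $\hat\sigma_{t-1}$ on which the recursion tree is built. Step 1 then already supplies the lower endpoints. The upper endpoints follow from $\hat t-t\le\eta$, i.e.\ the updated predictions still have maximum error $\eta$ (Lemma~\ref{lem:updating_predictions}).

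Your order-preservation identity gives this directly. The $t-1$ arrived edges all have $\widehat{i}\le t-1+\eta$, so at most $\eta+1$ unarrived edges have $\widehat{i}\le t+\eta$. Since $\widehat{i}(e)\le t+\eta$, this gives $\hat t\le t+\eta$. Combining with $t\le x$ (resp.\ $t\le r$) yields $\hat t\in[x,x+\eta]$ (resp.\ $[r,r+\eta]$), which is the paper's two-line proof.

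This version is also all the running-time analysis needs. It confines rebuild times to $t\in[x-\eta,x]\cup[r-\eta,r]$, so each subproblem is rebuilt $O(\eta)$ times.
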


   \begin{proof} 
   The subproblem $[\ell, r]$ is rebuilt only when it lies on the path from $\mathrm{LCA}(t, \hat t)$ to $t$ for some edge. There are two ways this can happen:

    \begin{enumerate}
        \item $[\ell, r]$ is the $\mathrm{LCA}(t, \hat t)$ for some pair $(t, \hat t)$.
        This occurs only if $t\leq x$ and $\hat{t} \geq x$. Since $|t - \hat t| \le \eta$, this requires that $\hat t \in [x, x + \eta]$.
        
        \item $[\ell, r]$ lies strictly below the LCA. This only happens if $t$ lies inside $[\ell, r]$, but $\hat t$ lies outside the subtree rooted at $[\ell, r]$; therefore, $\hat{t} \geq r$. 
        Since $[\ell, r]$ is on the path to $t$, we must have $t\leq r$; since $\hat{t} - t \leq \eta$, we have that $\hat{t}\in [r, r + \eta]$.\qedhere
    \end{enumerate}
\end{proof}

Before the next lemma, we motivate the structure of our analysis.  Lemma~\ref{lem:rebuild_edges} states that each subproblem is rebuilt only $O(\eta)$ times; this means that if we can bound the size of all subproblems (in other words, the total number of edges in all subproblems) we are done.  
Indeed, the classic analysis of the offline problem with $\eta = 0$ states that each edge is in at most one subproblem at each level.  This would mean that the total size of all subproblems is $O(m\log m)$; combining with Lemma~\ref{lem:rebuild_edges} would give the desired $O(\eta m\log m)$ bound.

Unfortunately, this analysis is incorrect.  The reason is that as $\pred$ changes, what edge belongs to each subproblem changes as well.  
For example, according to $\pred_0$, the endpoints of an edge $e=(u,v)$ might not be in the same SCC in $G_{m/2}$, which puts $e$ in the right child of the root. But at some later time $t<m/2$, an edge originally predicted to arrive at $t'>m/2$ might arrive that puts $u$ and $v$ in the same SCC in $G_{m/2}$. This would place $e$ in the left child of the root.
Lemmas~\ref{lem:combining_time_in_subproblem} through~\ref{lem:combining_time_changing} bound how often this can happen---i.e., an edge may be a part of $O(\eta \log m)$ builds in total.

For any edge $(u,v)$ and prediction $\pred$, let the \defn{combining time} $C_{u,v}(\pred)$ be the first time in $\pred$ when $u$ and $v$ are in the same SCC.  
If $u$ and $v$ are never in the same SCC in $\pred$, then $C_{u,v}(\pred)$ is undefined.
Lemma~\ref{lem:combining_time_in_subproblem} shows that at any time $t$, an edge $(u,v)$ is only in a subproblem that contains its combining time.  This lemma is useful both in analyzing the running time and proving correctness.

\begin{lemma}\label{lem:combining_time_in_subproblem}
Let $[\ell, r]$ be a subproblem  in the offline recursive tree built on $\pred_t$ that contains edge $e = (u,v)$. If $C_{u,v}(\pred_t)$ is defined, then $C_{u,v}(\pred_t) \in [\ell, r]$; if $C_{u,v}(\pred_t)$ is undefined, then $r = m$.
\end{lemma}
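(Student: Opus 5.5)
Fix the prediction $\pred_t$ and argue by induction on the depth of $[\ell, r]$ in the offline recursion tree built on $\pred_t$. We prove the following slightly stronger invariant. If $e=(u,v)$ lies in the edge set $E_x$ of subproblem $[\ell, r]$ (with $u,v$ now read as the contracted nodes containing the original endpoints), then two things hold. First, the original endpoints of $e$ are in different SCCs of $G_\ell$. Second, they are in the same SCC of $G_r$ whenever $C_{u,v}(\pred_t)$ is defined; if it is undefined they are in the same SCC of no $G_s$. The first part gives $C_{u,v}(\pred_t) > \ell$, or more precisely $\geq \ell$, which is all we need. The second part gives $C_{u,v}(\pred_t) \le r$. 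In the undefined case we show that $e$ can only be on the rightmost path, so $r=m$.

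\textbf{Base case (root).} Here $[\ell, r] = [0,m]$ and $G_0$ has no edges, so the endpoints are in different SCCs and $C_{u,v} \geq 0$. Also $C_{u,v} \le m$ whenever it is defined, and $r=m$, so both claims hold.

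\textbf{Inductive step, left child $[\ell, x]$.} Edge $e$ reaches the left child only if it is intra-SCC in $\G_x$ computed from edges up to time $x$. We use the standard correctness fact of the offline recursion, established alongside it inductively: the SCCs of $\G_x$ coincide with the SCCs of $G_x$ restricted to, or contracted from, the parent's representation. Given this, $u$ and $v$ are in the same SCC of $G_x$, so $C_{u,v}$ is defined and $C_{u,v} \le x$. The lower bound $C_{u,v} \ge \ell$ is inherited from the parent, since $\ell$ is unchanged.

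\textbf{Inductive step, right child $[x, r]$.} Edge $e$ reaches the right child only if it is inter-SCC in $\G_x$, so $u$ and $v$ are not in the same SCC of $G_x$. Hence $C_{u,v} > x$ or $C_{u,v}$ is undefined. The upper bound is inherited from the parent: either $C_{u,v} \le r$ with $r$ unchanged, or $C_{u,v}$ is undefined and the parent has $r=m$, which the right child also has. The undefined case never passes to a left child, because going left requires being intra-SCC, which forces $C$ to be defined. So an edge with undefined combining time only ever travels along right children and always has $r=m$.

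\textbf{Main obstacle.} The key step is justifying that "same SCC in $\G_x$" is equivalent to "same SCC in $G_x$" for the endpoints of edges in $E_x$. The difficulty is that $\G_x$ is a trimmed and contracted graph: left children drop inter-SCC edges, and right children contract SCCs. One direction is immediate, since any cycle in $\G_x$ lifts to a cycle in $G_x$ via the mutual reachability inside contracted nodes. For the other direction, we show that any cycle through $u$ and $v$ in $G_x$ survives the trimming. For each edge on that cycle, we argue inductively that it was never discarded as inter-SCC at an ancestor whose midpoint exceeds $x$. Such an ancestor has $x$ in its left subtree, and at its midpoint the cycle edges are intra-SCC, because the cycle exists already at time $x$, which is before that midpoint. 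If instead the edge was contracted away at an ancestor with midpoint below $x$, its endpoints are already merged there, so the cycle persists in the contracted graph. We would prove this equivalence as a joint induction with the lemma.
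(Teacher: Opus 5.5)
Your proof is correct and is essentially the paper's argument written as a forward induction on depth. The paper argues by contradiction: it looks at the ancestor whose midpoint is $\ell$ (resp.\ $r$) and observes that $e$ must have been a right (resp.\ left) edge there. That is exactly the step where your induction sets the lower (resp.\ upper) bound, and the undefined case is handled identically, since such an edge can never be sent to a left child.

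Where you go beyond the paper is in explicitly justifying that the endpoints of an edge in $E_x$ lie in the same SCC of $\G_x$ if and only if they do in $G_x$ under $\pred_t$. The paper uses this silently when it says ``$u$ and $v$ are in the same SCC in the midpoint of $[\ell_a, r_a]$, so $e$ cannot be in the right edges.'' Your joint induction fills this in correctly: every edge of a cycle present by time $r$ is either still an edge or already contracted at each subproblem $[\ell, r]$.
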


 \begin{proof}
We argue by contradiction.  

First, consider the case where $C_{u,v}(\pred_t)$ is undefined; assume by contradiction that $e$ is in a subproblem $[\ell, r]$ with $r\neq m$.
Let $[\ell, r]$ be the subproblem closest to the root containing $e$ where $r \neq m$.  Then $[\ell, r]$ was a left child---but by definition, this means that $u$ and $v$ were in the same component in the graph of the  parent of $[\ell, r]$; a contradiction.

If $C_{u,v}(\pred_t)$ is defined, 
let $[\ell, r]$ be the subproblem in the current path closest to the root such that $e$ is an edge in $[\ell, r]$, but $C_{u,v}(\pred_t)\notin [\ell, r]$.

If $C_{u,v}(\pred_t) < \ell$, let $[\ell_a, r_a]$ be the ancestor of $[\ell, r]$ with midpoint $\ell$. $[\ell_a, r_a]$ must exist since if $C_{u,v}(\pred_t)$ exists, $\ell > 0$.
Furthermore, $[\ell, r]$ must be a descendant of the right child of $[\ell_a, r_a]$. 
But since $C_{u,v}(\pred_t) < \ell$, $u$ and $v$ are in the same SCC in the midpoint of $[\ell_a, r_a]$, so $e$ cannot be in the right edges of $[\ell_a, r_a]$, contradicting that it is an edge in $[\ell, r]$.

If $C_{u,v}(\pred_t) > r$, let $[\ell_a, r_a]$ be the ancestor of $[\ell, r]$ with midpoint $r$.  $[\ell_a, r_a]$ must exist since if $C_{u,v}(\pred_t)$ exists, $r < m$. Furthermore, $[\ell, r]$ is a descendant of the left child of $[\ell_a, r_a]$.  
But since $C_{u,v}(\pred_t)>r$, $u$ and $v$ are in different SCCs in the midpoint of $[\ell_a, r_a]$, so $e$ cannot be in the left edges of $[\ell_a, r_a]$, contradicting that it is an edge in $[\ell,r]$. 
\end{proof}

Now, we show that the error bound $\eta$ implies that the combining time of any pair of vertices cannot differ significantly as $\pred$ is updated.
First, we need a lemma showing that the way we update predictions does not increase the error of any edge beyond $\eta$.

\begin{lemma}
\label{lem:updating_predictions}
    For any edge $e$ and any $\pred_t$, let $i_t(e)$ be the position of $e$ in $\pred_t$, and $i(e)$ be the position of $e$ in $\sigma$.  Then if $\pred_0$ has maximum error $\eta$, we have $|i(e) - i_t(e)| \leq \eta$---thus, $\pred_t$ also has maximum error $\eta$. This implies that for each $t_1$ and $t_2$ we have $|i_{t_1}(e)-i_{t_2}(e)|\leq 2\eta$.
\end{lemma}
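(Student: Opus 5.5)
We argue by induction on $t$, proving the stronger invariant that for every edge $e$ the position $i_t(e)$ lies between $i(e)$ and $i_0(e)$ (inclusive). Since $|i(e)-i_0(e)|\le\eta$, this invariant immediately gives $|i(e)-i_t(e)|\le\eta$. The base case $t=0$ is trivial.

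For the inductive step, consider the update from $\pred_{t-1}$ to $\pred_t$ triggered by $e_t$, with predicted time $\hat t = i_{t-1}(e_t)$. Because $\pred_{t-1}$ agrees with $\sigma$ on the first $t-1$ positions and $e_t$ is not among them, we have $\hat t\ge t$. If $\hat t = t$, nothing changes. Otherwise $e_t$ moves to position $t = i(e_t)$, so its new error is $0$ and the invariant holds trivially for it.

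Every other edge $e'$ whose position changes had $i_{t-1}(e')\in[t,\hat t-1]$ and is shifted one step later, so $i_t(e') = i_{t-1}(e')+1$. The edges in positions $1,\dots,t-1$ are fixed and agree with $\sigma$, and $e_t$ occupies position $t$ in $\sigma$. Hence any such $e'$ satisfies $i(e')\ge t+1$. The key claim is that $i(e')\ge i_{t-1}(e')+1$, i.e., $e'$ was predicted too early, so moving it later only brings it closer to its true position.

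This is the main obstacle, since $i(e')\ge t+1$ alone is too weak. One approach is a counting argument, but a cleaner route is to use the invariant directly. If $i_{t-1}(e') < i(e')$, then $i_t(e') = i_{t-1}(e')+1 \le i(e')$, and since $i_{t-1}(e')$ lies between $i(e')$ and $i_0(e')$, the new position still lies between them. If $i_{t-1}(e')\ge i(e')$, then $i(e')\le i_{t-1}(e')\le \hat t-1$, but this case can genuinely occur: for example, $e'$ at position $t$ whose true time is $t+1$. Shifting such an edge to $i_{t-1}(e')+1$ could overshoot.

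To handle this, I would drop the betweenness invariant and instead bound the error directly. For a shifted edge $e'$, we have $i_t(e')\le \hat t$ and $i(e')\ge t+1$, so
\[
i_t(e')-i(e') \le \hat t - t - 1 \le \eta-1,
\]
using $\hat t - t = \hat t - i(e_t)$. Here I would need $\hat t - i(e_t)\le\eta$, which would follow from an inductive bound on $e_t$'s error in $\pred_{t-1}$. In the other direction, $i(e')-i_t(e') < i(e') - i_{t-1}(e')\le\eta$ by induction.

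So the induction hypothesis is simply "$\pred_{t-1}$ has maximum error $\eta$", and both directions close cleanly. The last claim then follows from the triangle inequality:
\[
|i_{t_1}(e)-i_{t_2}(e)| \le |i_{t_1}(e)-i(e)| + |i(e)-i_{t_2}(e)| \le 2\eta.
\]
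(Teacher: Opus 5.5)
Your final argument is correct and is essentially the paper's induction. It uses the same ingredients ($\hat t - t\le\eta$ from the inductive hypothesis applied to $e_t$, $i_t(e')\le\hat t$ and $i(e')\ge t+1$ for shifted edges, and the fact that a one-step later shift can only shrink a ``predicted too early'' error); the only differences are cosmetic: you split by the sign of $i_t(e')-i(e')$ rather than by whether $i(e')\ge\hat t$ as the paper does, and you were right to abandon the betweenness invariant, which does fail in general.
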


\begin{proof} 
We prove this by induction on $t$; it is trivially true for $t=0$.

Note that when we move from $\pred_{t-1}$ to $\pred_{t}$, we move an edge from $\hat{t}\geq t$ to its final location $t$; all edges in positions from $t$ to $\hat{t}-1$ move forward one, and all others remain unchanged.  By definition, $\hat{t} - t \leq \eta$. 
The edge $e$ being moved to $t$ has $i_{t}(e) = i(e)$.  

Any edge $e$  being moved forward one (i.e. an edge between $t$ and $\hat{t}-1$ in $\pred_{t-1}$) that has $i(e) \geq \hat{t}$ is being moved closer to $i(e)$; by the inductive hypothesis the lemma holds.  
Now suppose an edge $e$ is being moved forward one, and $i(e) \leq \hat{t}$.
For all the edges that move forward we have $i_t(e) \leq \hat{t}$. Furthermore, since $e$ has not arrived, $i(e) > t$.  Combining with $\hat{t} - t \leq \eta$ implies $|i_{t}(e) - i(e)| \leq \eta$.
\end{proof}

Finally, Lemma~\ref{lem:combining_time_changing} states that the combining time does not change significantly as we update the predictions.

\begin{lemma}\label{lem:combining_time_changing}
For any pair of vertices $u, v$, and any $\pred_t$, $C_{u,v}(\pred_t) \in [C_{u,v}(\pred_0) - 2\eta, C_{u,v}(\pred_0) + 2\eta]$.
\end{lemma}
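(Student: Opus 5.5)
Let $C_0 = C_{u,v}(\pred_0)$ and $C_t = C_{u,v}(\pred_t)$. We will prove both inequalities $C_t \le C_0 + 2\eta$ and $C_0 \le C_t + 2\eta$ by the same monotonicity argument, using only Lemma~\ref{lem:updating_predictions}. The key observation is that whether $u$ and $v$ are in the same SCC at a given time in a sequence depends only on the \emph{set} of edges appearing up to that time, and this property is monotone under adding edges.

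For the first inequality, assume $C_0$ is defined. At time $C_0$ in $\pred_0$, the prefix $P_0$ of the first $C_0$ edges of $\pred_0$ contains a pair of directed paths $u \leadsto v$ and $v \leadsto u$. Fix any edge $e \in P_0$, so $i_0(e) \le C_0$. By Lemma~\ref{lem:updating_predictions}, $|i_t(e) - i_0(e)| \le 2\eta$, hence $i_t(e) \le C_0 + 2\eta$. Therefore every edge of $P_0$ lies in the prefix of $\pred_t$ of length $C_0 + 2\eta$ (capped at $m$). The graph at time $C_0 + 2\eta$ under $\pred_t$ thus contains all edges of $P_0$, so $u$ and $v$ are mutually reachable there. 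Hence $C_t$ is defined and $C_t \le C_0 + 2\eta$.

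The symmetric argument gives the other inequality: starting from the prefix of $\pred_t$ of length $C_t$, each edge $e$ in it has $i_0(e) \le i_t(e) + 2\eta \le C_t + 2\eta$. So $u$ and $v$ are strongly connected at time $C_t + 2\eta$ in $\pred_0$, giving $C_0 \le C_t + 2\eta$. Combining the two bounds yields $C_t \in [C_0 - 2\eta, C_0 + 2\eta]$.

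The only subtle points are bookkeeping ones. First, since all $\pred_t$ are permutations of the same edge set, $C$ is either defined for all $t$ or for none, because at time $m$ every sequence gives the same full graph. Second, shifts past $m$ are handled by capping at $m$. The main (mild) obstacle is simply making sure we invoke the $2\eta$ bound between arbitrary $\pred_{t_1}$ and $\pred_{t_2}$ from Lemma~\ref{lem:updating_predictions}, rather than the $\eta$ bound relative to $\sigma$.
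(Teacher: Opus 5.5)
Your proof is correct and follows the paper's argument. The upper bound is identical: by Lemma~\ref{lem:updating_predictions}, every edge in the length-$C_{u,v}(\pred_0)$ prefix of $\pred_0$ lies in the length-$(C_{u,v}(\pred_0)+2\eta)$ prefix of $\pred_t$, and mutual reachability is monotone in the edge set. For the lower bound you swap the roles of $\pred_0$ and $\pred_t$ and argue forward from connectivity, while the paper argues from non-connectivity at time $C_{u,v}(\pred_0)-1$ in $\pred_0$ down to time $C_{u,v}(\pred_0)-2\eta-1$ in $\pred_t$. The two are contrapositives of each other. Your remark that $C_{u,v}$ is either defined for every $\pred_t$ or for none also fills in a detail the paper leaves implicit.
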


\begin{proof} 
   Consider the graph $G'$ constructed using all vertices in $V$, and the first $C_{u,v}(\pred_0)$ edges in  $\hat{\sigma}_0$; $u$ and $v$ are in the same strongly connected component in $G'$.  
   Let $G''$ be the graph constructed using vertices in $V$, and the first $C_{u,v}(\pred_0) + 2\eta$ edges in $\pred_t$.  By Lemma~\ref{lem:updating_predictions}, all edges in $G'$ must have already arrived by time $C_{u,v}(\pred_0) + 2\eta$ in $\pred_t$, so the edges of $G''$ are a superset of the edges of $G'$; that is, $u$ and $v$ must be in the same SCC in $G''$.  Therefore, $C_{u,v}(\pred_t) \leq C_{u,v}(\pred_0) + 2\eta$.
   
   Let $G'$ be the graph  at time $C_{u,v}(\pred_0)- 1$ for $\pred_0$; $u$ and $v$ are not in the same strongly connected component in $G'$.  Let $G''$ be the graph at time $C_{u,v}(\pred_0) - 2\eta - 1$ for $\pred_t$. By Lemma~\ref{lem:updating_predictions}, all edges in $G''$ must have already arrived by $C_{u,v}(\pred_0)-1$ in $\pred_0$, so $G'$ has every edge $G''$ has.  This means that $u$ and $v$ are not in the same strongly connected component in $G''$, which implies that $C_{u,v}(\pred_t) \geq C_{u,v}(\pred_0) - 2\eta$.
\end{proof}

We are ready to prove Theorem~\ref{thm:runtime}.
\begin{proof}[Proof of Theorem~\ref{thm:runtime}]
Each time a node is relabeled in the Node Label Data Structure, the size of its SCC doubles;  thus the total cost of all merges is $O(n\log n)$.

The cost to build a subproblem is linear in the number of vertices of $\G_x$ and the number of edges $\tilde{E}_x$.
Let $\mathcal{S}(t)$ be the set of subproblems built at time $t$.  The total cost to build all subproblems is
\[
\sum_{t} \sum_{t' \in \mathcal{S}(t)} (\# \text{ edges in } t') = 
\sum_{e\in \sigma} (\# \text{ builds of subproblems containing } e).
\]
We bound this final sum. 
First, any $e= (u,v)$ where $C_{u,v}(\pred_t)$ is undefined can only be in subproblems with $r = m$ by Lemma~\ref{lem:combining_time_in_subproblem}; there are at most $\log m$ such subproblems,  and each is rebuilt $O(\eta)$ times by Lemma~\ref{lem:rebuild_edges}, so the total cost of these edges is $O(m\eta\log m)$.

Now suppose $C_{u,v}(\pred_t)$ is defined and $e = (u,v)$ is in $[\ell, r]$ built at time $t$. We show that in each level, over time, there are $O(\eta)$ such subproblems. Summing over all $\log m$ levels gives a cost of $O(\eta m\log m)$.

If $e$ is in $[\ell,r]$ at time $t$, by Lemma~\ref{lem:combining_time_in_subproblem} we have $C_{u,v}(\pred_t)\in [\ell,r]$, and Lemma~\ref{lem:combining_time_changing} implies that $C_{u,v}(\pred_0) \in [\ell-2\eta, r+2\eta]$.

Consider a depth $d$, such that $r - \ell \geq \eta$ for subproblems at depth $d$.  There are at most five subproblems at depth $d$ that have $C_{u,v}(\pred_0)\in [\ell - 2\eta, r + 2\eta]$.  Each is rebuilt $O(\eta)$ times by Lemma~\ref{lem:rebuild_edges}; therefore, there are $O(\eta)$ subproblem builds at depth $d$ that contain $e$.

Now we consider a depth $d$ such that all subproblems have $r - \ell < \eta$.  
If a subproblem $[\ell,r]$ in this level contains $e$, then $C_{u,v}(\pred_0)\leq r+2\eta < \ell+3\eta$ and $C_{u,v}(\pred_0) \geq \ell-2\eta > r-3\eta$, which imply $\ell>C_{u,v}(\pred_0)-3\eta$ and $r<C_{u,v}(\pred_0)+3\eta$, respectively.
Then if $e'$ causes the subproblem $[\ell,r]$ to be rebuilt, by Lemma~\ref{lem:rebuild_edges}, $e'$ was predicted to arrive at time steps $[\ell,r+\eta]$ in $\pred$. The length of this interval is $O(\eta)$, which means that $e$ has been involved in $O(\eta)$ rebuilds at depth $d$ over time.
\end{proof}

\section{Correctness} \label{sec:correctness}
In this section, we prove correctness of our algorithm.  

The following lemma characterizes exactly what edges are present in each subproblem for a fixed prediction.

\begin{lemma}
\label{lem:edges_in_subproblem}
At any time $t$, for any subproblem $[\ell, r]$ in the offline recursive tree built on $\pred_t$, an edge $e = (u,v)\in \pred_t$ is in $[\ell, r]$ if and only if:
\begin{itemize}
    \item $r = m$ and $C_{u,v}(\pred_t)$ is undefined; or
    \item $C_{u,v}(\pred_t) \in [\ell, r]$.
\end{itemize}
\end{lemma}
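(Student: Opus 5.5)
The "only if" direction is exactly Lemma~\ref{lem:combining_time_in_subproblem}, so all the work is in the "if" direction. I will prove it by induction on the depth of $[\ell, r]$ in the offline recursion tree built on $\pred_t$. The inductive claim is: whenever $[\ell,r]$ satisfies one of the two conditions for $e=(u,v)$, then $e\in E_x$.

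\emph{Base case.} The root $[0,m]$ contains every edge of $\pred_t$, and the claim holds trivially there.

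\emph{Inductive step.} Let $[\ell,r]$ be a child of $[\ell^p,r^p]$, whose midpoint is $x^p$, and suppose $e$ satisfies one of the conditions for $[\ell,r]$.

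First I check that $e$ satisfies the condition for the parent as well.
\begin{itemize}
\item If $C_{u,v}(\pred_t)\in[\ell,r]$, then $C_{u,v}(\pred_t)\in[\ell^p,r^p]$, since the child interval is contained in the parent interval.
\item If $r=m$ and $C_{u,v}(\pred_t)$ is undefined, then $r^p=m$.
\end{itemize}
By induction, $e\in E_{x^p}$. It remains to show that $e$ is routed to the correct child, i.e.\ it is an intra-SCC edge of $\G_{x^p}$ exactly when $C_{u,v}(\pred_t)\le x^p$.

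\emph{Key auxiliary claim.} For $e\in E_{x^p}$, the endpoints of $e$, mapped to their contracted nodes in $V_{x^p}$, lie in the same SCC of $\G_{x^p}$ (counting only edges at times $\le x^p$) if and only if $u$ and $v$ are in the same SCC of $G_{x^p}$ under $\pred_t$. This is the standard invariant of the offline algorithm: $\G_{x}$ together with the contraction $M_x$ faithfully represents the SCC structure of $G_{x}$ restricted to the vertices it contains.

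I will prove the auxiliary claim in the same induction, using two facts.
\begin{itemize}
\item Right children contract SCCs that are already formed at the parent's midpoint. This is valid because SCCs only grow over time.
\item The inter-SCC edges that are discarded when passing to a left child cannot lie on any cycle before the parent's midpoint. This holds because SCCs at earlier times refine those at the midpoint.
\end{itemize}
From these I get that any cycle in $G_{x}$ through $u$ and $v$ uses only edges whose endpoints combine within $[\ell,r]$, or edges that were contracted away. Each such edge is itself present in $[\ell,r]$ by the current direction of the induction, applied to edges with earlier combining times. Conversely, every path in $\G_x$ lifts to a path in $G_x$.

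\emph{Finishing the routing.} Given the auxiliary claim, if $C_{u,v}(\pred_t)\le x^p$, then $e$ is a left edge of the parent. This places $e$ in the left child, which is the one containing $C_{u,v}(\pred_t)$ when that time is in $[\ell^p,x^p]$. Otherwise $e$ is a right edge and goes to the right child. The undefined case always goes right, which is consistent with $r=m$.

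\emph{Main obstacle.} I expect the hard step to be the auxiliary claim that $\G_x$ captures exactly the SCCs of $G_x$. Its proof has to be interleaved with the edge-membership induction, since each part depends on the other. To avoid circularity I plan to order the induction by depth and, within a depth, by combining time.
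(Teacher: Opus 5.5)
Your argument is correct in outline, but it proves the ``if'' direction by a different route than the paper. The paper never re-derives the routing. The root holds all of $\pred_t$, and every edge of a subproblem is either a left or a right edge, so each edge lies in exactly one subproblem per level. Lemma~\ref{lem:combining_time_in_subproblem} forces that subproblem to satisfy the condition, and only one subproblem per level does, so it must be that one.

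You instead induct on depth and show directly that $e$ is passed to the child containing $C_{u,v}(\pred_t)$. This requires the invariant that $\G_x$, with its contraction, captures exactly the SCCs of $G_x$. The paper's pigeonhole step is shorter because that invariant is used only once. It appears, without proof, inside Lemma~\ref{lem:combining_time_in_subproblem}, at the step ``since $C_{u,v}(\pred_t)<\ell$, $u$ and $v$ are in the same SCC in the midpoint.''

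Your route buys self-containment: it makes the hidden invariant explicit and proves it. Two simplifications follow from it:
\begin{itemize}
\item Once you have the routing rule (left if and only if $C_{u,v}(\pred_t)\le x^p$; right otherwise or if undefined), the same induction gives both directions at once. The appeal to Lemma~\ref{lem:combining_time_in_subproblem} is then redundant.
\item Plain induction on depth suffices. Membership at depth $d$ uses the invariant only at depth $d-1$. The invariant at depth $d$ uses membership at depth $d$ and the contractions made at ancestors. So your secondary ordering by combining time is unnecessary.
\end{itemize}

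One caveat applies to both arguments: with closed intervals, the ``if'' direction fails at shared endpoints. If $C_{u,v}(\pred_t)=x^p$, then $e$ is a left edge of the parent. It is therefore not in the right child $[x^p,r^p]$, even though $C_{u,v}(\pred_t)\in[x^p,r^p]$. Your inductive step for the right child breaks exactly there. So does the paper's claim that only one subproblem per level contains $C_{u,v}(\pred_t)$. Reading each subproblem as the half-open interval $(\ell,r]$ repairs both proofs.
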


\begin{proof}
    Lemma~\ref{lem:combining_time_in_subproblem} immediately gives one direction of the proof.

    For the other direction, note that every edge $e$ in a subproblem is either a right or left edge.  Immediately, $e$ is an edge in exactly one subproblem at every level of the tree. 
    If $C_{u,v}(\pred_t)$ is defined, only one subproblem at every level satisfies $C_{u,v}(\pred_t)\in [\ell, r]$. If $C_{u,v}(\pred_t)$ is undefined, 
    again only one subproblem at every level satisfies $r = m$. Lemma~\ref{lem:combining_time_in_subproblem} implies that that subproblem must contain $e$.
\end{proof}

Now, we show that we rebuild the subproblems necessary to accurately update our components.

\begin{lemma}
\label{lem:correct_rebuild_necessary}
    Consider a subproblem $[\ell, r]$ on the path of subproblems maintained by the algorithm such that $[\ell, r]$ is not built at time $t$.   Then for all $u,v$ such that $C_{u,v}(\pred_{t-1})$ is defined, $C_{u,v}(\pred_{t-1}) \in [\ell, r]$ if and only if $C_{u,v}(\pred_t)\in [\ell, r]$.
\end{lemma}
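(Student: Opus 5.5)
The plan is to understand exactly how $\pred_{t-1}$ and $\pred_t$ differ, and then to use the fact that a subproblem on the maintained path that is not rebuilt must sit entirely outside the window of positions that changed.

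\emph{Step 1: locate the changed window.} In passing from $\pred_{t-1}$ to $\pred_t$, the edge $e_t$ moves from position $\hat t$ to position $t$, and the edges in positions $t,\dots,\hat t-1$ shift one step later. All other positions are unchanged. Hence for every time $s<t$ and every $s\ge \hat t$, the graph formed by the first $s$ edges is identical under the two predictions: the prefixes of length $s$ have the same edge set. Consequently, for every pair $u,v$, the statement ``$u,v$ are in the same SCC at time $s$'' has the same truth value under $\pred_{t-1}$ and $\pred_t$ whenever $s\notin[t,\hat t-1]$.

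\emph{Step 2: position of an unrebuilt path subproblem.} The algorithm rebuilds the LCA of $t-1$ and $\hat t$ and everything below it on the new path. So an unrebuilt subproblem $[\ell,r]$ on the maintained path is a strict ancestor of that LCA. In particular it contains both $t-1$ and $\hat t$, and its midpoint $x$ lies outside $[t,\hat t-1]$, since otherwise $[\ell,r]$ would itself separate $t-1$ from $\hat t$ and be the LCA. Thus $\ell\le t-1$ and $\hat t\le r$, which means the changed window $[t,\hat t-1]$ lies strictly inside $(\ell,r)$. (If nothing moved, i.e.\ $\hat t=t$, then $\pred_t=\pred_{t-1}$ and the claim is trivial.)

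\emph{Step 3: combine.} The combining time $C_{u,v}(\pred)$ is the first $s$ at which $u,v$ are in the same SCC, and this property is monotone in $s$. Membership $C\in[\ell,r]$ is equivalent to the conjunction ``not combined at time $\ell-1$'' and ``combined at time $r$''. Boundary conventions need care here, and for $\ell=0$ the first condition is vacuous. Both $\ell-1<t$ and $r\ge\hat t$ fall outside the changed window, so by Step 1 each condition has the same truth value under $\pred_{t-1}$ and $\pred_t$. This gives the equivalence. Because $C_{u,v}(\pred_{t-1})$ is defined and $r\ge \hat t$, and the graph at time $m$ is the same under both predictions, $C_{u,v}(\pred_t)$ is also defined, so the statement makes sense.

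\emph{Main obstacle.} The hardest part is Step 2: justifying rigorously, from the algorithm's description of walking back along the path to the first subproblem containing $\hat t$, that every unrebuilt path subproblem strictly contains the window $[t-1,\hat t]$. Off-by-one conventions at the endpoints (whether intervals are closed, and whether the LCA interval is taken to contain $\hat t$ at an endpoint) also need checking. I would handle this by noting that the walk stops at the deepest path node containing $\hat t$, so all nodes above it contain $\hat t$ and $t-1$ and are not rebuilt. I would treat the endpoint cases $\hat t=r$ or $\ell=t-1$ separately, using that equality of graphs holds at time exactly $\hat t$ and at time exactly $t-1$.
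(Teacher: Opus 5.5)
Your proof is correct, but it finishes by a different route than the paper. Both arguments rest on the same two facts. First, the prefix graphs of $\pred_{t-1}$ and $\pred_t$ agree at every length outside $[t,\hat t-1]$. Second, the subproblems not built at time $t$ are exactly the strict ancestors of the first path interval containing $\hat t$.

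From there the paper argues through the recursion tree. Each such ancestor has its midpoint $x$ outside the changed window, so $\G_x$, and hence its split into left and right edges, is identical under both predictions; the conclusion is then read off from Lemma~\ref{lem:edges_in_subproblem}. You never look at midpoints or stored edge sets. Instead you use monotonicity to rewrite $C_{u,v}\in[\ell,r]$ as ``not combined at $\ell-1$ and combined at $r$,'' and note that both times lie outside the window.

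Your route is more elementary and slightly more general:
\begin{itemize}
\item It applies to arbitrary pairs $u,v$, exactly as the statement is phrased. The appeal to Lemma~\ref{lem:edges_in_subproblem} directly covers only pairs that form an edge.
\item It needs only the weak containments $\ell\le t-1$ and $\hat t\le r$, so it is insensitive to the closed-interval convention. If the walk stops at an interval whose right endpoint is $\hat t$, an unrebuilt ancestor can have midpoint exactly $\hat t$. The paper's strict inequalities then fail, though harmlessly.
\end{itemize}

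What the paper's route buys is a statement about the data the algorithm actually caches: the left and right edges of unrebuilt subproblems stay valid. That is the form consumed in Lemma~\ref{lem:union-find-correct}; you recover it by combining your lemma with Lemma~\ref{lem:edges_in_subproblem}.

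Finally, the midpoint remark in your Step~2 and the separate endpoint cases you anticipate are unnecessary. Step~3 already handles $\ell=t-1$ and $r=\hat t$ uniformly, because the prefix graphs coincide at lengths $t-2$, $t-1$ and $\hat t$.
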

\begin{proof}
    Recall that the edges in $\pred_{t-1}$ and $\pred_t$ are the same except for those in $\{t, \ldots, \hat{t}\}$.  Immediately, the lemma is correct for any $C_{u,v}(\pred_{t-1})\notin \{t, \ldots, \hat{t}\}$.

    Recall also that the algorithm rebuilds the first interval $[\ell_1, r_1]$ on the current path of subproblems containing $\hat{t}$ and all descendants.  We are left to show the lemma for the ancestors of $[\ell_1, r_1]$.  
 Consider building the path from the root to $[\ell_1, r_1]$.
     Any such ancestor $[\ell, r]$ with midpoint $x$ has either $\ell < t \leq \hat{t} < x$, or $x < t \leq \hat{t} < r$; therefore, $\G_x$ is the same under $\pred_{t-1}$ and $\pred_t$, so the left and right edges are the same under $\pred_{t-1}$ and $\pred_t$; we are done by Lemma~\ref{lem:edges_in_subproblem}.
\end{proof}

Finally, we can show correctness.

\begin{lemma}
    \label{lem:union-find-correct}
    At time $t$, for any vertices $u$ and $v$, $L[u] = L[v]$ in the node label data structure if and only if $u$ and $v$ are in the same SCC in $G_t$.
\end{lemma}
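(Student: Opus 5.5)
We argue by induction on $t$, maintaining the invariant that after time $t$, the partition of $V$ induced by the labels $L$ equals the SCC partition of $G_t$. For $t=0$, $G_0$ has no edges, so every SCC is a singleton, and $L[i]=i$ matches this. For the inductive step, we assume the invariant at time $t-1$. Since edges are only inserted, the SCC partition of $G_t$ is a coarsening of that of $G_{t-1}$. Merges in the node label data structure only coarsen the label partition. So it suffices to prove two things: (a) every pair $u,v$ whose combining time with respect to $\sigma$ is exactly $t$ ends up with a common label, and (b) no merge performed at time $t$ joins two vertices that are in different SCCs of $G_t$.

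The main step is to show that the base-case subproblem with midpoint $x=t$, as held by the algorithm after processing $e_t$, carries the correct graph. We claim that this stored subproblem coincides with the subproblem of the offline recursive tree built on $\pred_t$. For every subproblem on the path that was rebuilt at time $t$, this holds by construction, because the build follows exactly the offline rules applied to $\pred_t$. For every ancestor that was not rebuilt, we invoke Lemma~\ref{lem:correct_rebuild_necessary}: the set of pairs whose combining time lies in $[\ell,r]$ is the same under $\pred_{t-1}$ and $\pred_t$. By Lemma~\ref{lem:edges_in_subproblem}, the edge set of each such subproblem is therefore the same under both predictions. The contracted vertex sets and the mappings $M_x$ depend only on SCCs at the ancestors' midpoints, and those midpoints are unaffected, as in the proof of Lemma~\ref{lem:correct_rebuild_necessary}. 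This step is where I expect the most care to be needed. One must check that the stored left and right edge lists of the unrebuilt ancestors remain valid even though the positions of edges in $\{t,\dots,\hat t\}$ shifted. One must also check that Tarjan's filter (only edges at time $\le x$) uses $\pred_t$-times, which agree with $\sigma$ on the first $t$ edges.

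Given the claim, $\G_t$ restricted to edges with $\pred_t$-time at most $t$ is exactly the offline structure for $G_t$, since $\pred_t$ agrees with $\sigma$ on its first $t$ edges. We next need the offline correctness statement: $\G_x$ is obtained from $G_x$ by contracting the SCCs of $G_\ell$, via the mapping $M_x$ to representatives, and keeping only a subset of inter-component edges. The kept edges include every edge $(u,v)$ with $C_{u,v}\in[\ell,r]$, by Lemma~\ref{lem:edges_in_subproblem}. Any edge lying on a cycle of $G_x$ that is not inside a contracted node has endpoints joining at a time in $(\ell,x]$, so it is retained. Hence the SCCs of $\G_x$ correspond, through $M_x$, to the SCCs of $G_x$ modulo the contraction at time $\ell$. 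This can be proven by induction down the tree, using the left/right case analysis from Section~\ref{sec:warmup}.

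Finally, we conclude the two directions. For (b), each merge unites representatives $M_t(c)$ of vertices lying in one SCC of $\G_t$, hence in one SCC of $G_t$. By induction, the label classes of those representatives are exactly their SCCs in $G_{\ell}\subseteq G_t$, so the merged class lies within a single SCC of $G_t$. For (a), let $u,v$ be in the same SCC of $G_t$. Either they were already merged at time $\ell$ (and hence by time $t-1$, by induction), or their contracted nodes lie in the same SCC of $\G_t$, in which case the base-case merge unites their representatives' labels. Because the labels at time $t-1$ already capture the SCCs of $G_{t-1}\supseteq G_\ell$, uniting the representatives unites the full classes of $u$ and $v$.
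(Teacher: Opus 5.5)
Your proposal is correct and follows essentially the same route as the paper. Both use induction on $t$, and both use Lemmas~\ref{lem:edges_in_subproblem} and~\ref{lem:correct_rebuild_necessary} to pin down the edges of the subproblem with midpoint $t$. Your packaged correspondence between the SCCs of $\G_t$ and those of $G_t$ (modulo the contraction at time $\ell$) is what the paper proves inline, by lifting paths in each direction.

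One small slip: in (b), the label classes at time $t-1$ are the SCCs of $G_{t-1}$, not of $G_\ell$. Since $G_{t-1}\subseteq G_t$, your conclusion is unaffected.
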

\begin{proof}
We prove this by induction on $t$.  At $t=0$, all nodes have distinct labels. As there are no edges, each node is in its own SCC.  

Consider some $t\neq m$.  Let $[\ell, r]$ be the subproblem with midpoint $t$ in the path of subproblems maintained by the algorithm.  Assume the lemma for time $t-1$: $L[u] = L[v]$ if and only if $C_{u,v}(\sigma) \leq t-1$.  Note that $\pred_{t-1}$ and $\sigma$ agree on the first $t-1$ edges, so we can equivalently write $C_{u,v}(\pred_{t-1}) \leq t-1$.  

Recall that our algorithm finds all SCCs of $\G_t$; for each SCC $C$, the algorithm merges all pairs of vertices so that they have the same label.  Our goal is to show that two vertices are merged in this way if and only if $C_{u,v}(\pred_t) = t$.

By Lemmas~\ref{lem:edges_in_subproblem} and~\ref{lem:correct_rebuild_necessary}, the edges of subproblem $[\ell, r]$ are exactly those with combining time in $[\ell, r]$ under $\pred_t$.  Thus, $\G_t$ consists of all edges $e = (u,v)$ such that: $C_{u,v}(\pred_t) \in [\ell, r]$, and $e$ arrives\footnote{Specifically, $e$ arrives in $\sigma$: the true sequence of arrivals.} on or before $t$. 

Consider two nodes $u$ and $v$ with $C_{u,v}(\pred_t) = t$; there must be a path $p$ from $u$ to $v$ in $G_t$.  Our goal is essentially to show that there is a path from $u$ to $v$ in $\G_t$.
Note that since $u$ and $v$ are in the same SCC at time $t$, any two nodes on the path $p$ must be in the same SCC at time $t$.

For each edge $(w, z)$ in $p$ we split into two cases.  If $C_{w, z}(\pred_t) < \ell$, then $w$ and $z$ were merged into a single vertex in an ancestor of $[\ell, r]$.  If $C_{w, z}(\pred_t) \geq \ell$, by the above we have that $w$ and $z$ are in the same SCC at time $t$, so $C_{w, z}(\pred_t) \leq t \leq r$.  Since $(w, z)$ in $p$ is an edge in $G_t$, it arrives by time $t$ in $\pred_t$, so $(w, z)\in \G_t$.  

We immediately obtain a path $p'$ in $\G_t$ by merging successive vertices $(w, z)$ in $p$ with $C_{w, z}(\pred_t) < \ell$; by the inductive hypothesis these merged vertices all have the same label.  Similarly, there is a path $p''$ in $\G_t$ obtained by merging successive vertices in the path from $v$ to $u$ in $G_t$.  Thus, all vertices in $p'$ and $p''$ will be merged in the Node Label Data Structure, and $L[u] = L[v]$ at time $t$.

Now, the other direction: we show that if $L[u] = L[v]$ at time $t$, then $u$ and $v$ are in the same SCC of $G_t$.   By the inductive hypothesis, we need only show this for pairs $u$ and $v$ that are merged in the Node Label Data Structure at time $t$.   

A pair $u$ and $v$ are merged if and only if there is a path from $u$ to $v$, and a path from $v$ to $u$ in $\G_t$.
Let $p$ be the path from $u$ to $v$ in $\G_t$.  Each edge in $p$ is an edge that arrives before $t$ in $\G_t$, so each edge in $p$ is also an edge in $G_t$.  Note that $\G_t$ is a condensed graph, so each node represents a set of nodes in $G_t$.
Consider two successive edges $e_1 = (u_1,v_1)$ and $e_2 = (u_2, v_2)$ in $p$ (here, $u_1, v_1, u_2, v_2\in V$---we are referring to $e_1$ and $e_2$ using their endpoints in $G_t$).  Then $v_1$ and $u_2$ were merged into a single vertex in $[\ell, r]$; this occurs if $v_1$ and $u_2$ were in the same SCC of some subproblem $[\ell', r']$ with midpoint $x'$, where $[\ell, r]$ is a right descendant of $[\ell', r']$.  Since $[\ell, r]$ is a right descendant of $[\ell', r']$, $x' < \ell$.  Therefore, $v_1$ and $u_2$ must be in the same SCC at time $x' < \ell$.  Thus, there is a path from $v_1$ to $u_2$ in $G_t$.  

Thus, each edge in path $p$ in $\G_t$ is an edge in $G_t$, and between the original endpoints of two successive edges in $p$ there is a path in $G_t$. 
Therefore there is a path from $u$ to $v$ in $G_t$. Using the same argument, there is a path from $v$ to $u$ in $G_t$ as well. 
Thus, $u$ and $v$  are in the same SCC of $G_t$.
\end{proof}
\section{Experimental Evaluation}\label{sec:experiments}
In this section, we describe our experimental results.\footnote{The Python implementations  of our experiments are available at \href{https://github.com/helia-niaparast/incremental-strongly-connected-components-with-predictions}{https://github.com/helia-niaparast/incremental-strongly-connected-components-with-predictions}.}

\subparagraph*{Incremental SCC$^+$.}
We compare against the state-of-the-art heuristic for the problem: $\text{IncSCC}^+$~\cite{fan2017incremental}.
In $\text{IncSCC}^+$, the algorithm maintains a topological ordering of the nodes at all times, that is, each node $v$ in the current (possibly contracted) graph is assigned a \emph{topological rank} $r(v)$ such that for every arc $(u,u')$ between distinct SCCs, we have $r(u) > r(u')$. When a new arc $(v,w)$ is added in the ``wrong'' direction, i.e., $r(v) < r(w)$, the algorithm identifies potentially affected nodes by running forward and backward depth-first searches from $w$ and $v$, respectively. It then applies Tarjan’s algorithm on the affected nodes to determine whether a new cycle has been formed.   We were not able to obtain the authors' original code, so we implemented our own version of their algorithm. 
In this implementation, when a cycle is formed, we merge the SCCs in the cycle into a new SCC, and run DFS on the whole graph to find the new topological ordering.

We also made an optimized version (IncSCC+ optimized) with two modifications.
First, we skip the final run of Tarjan's algorithm, as cycle formation can be detected immediately from the affected nodes.  Second, after a node is contracted, in order to find a new valid topological ordering, we only rearrange the affected nodes. 
These factors do not change the main ideas of the algorithm, but lead to a noticeable performance improvement in practice.  
In our tests, we compare to both $\text{IncSCC}^+$ implementations.

We also include the performance of the \textbf{offline incremental algorithm} from Section~\ref{sec:warmup}, which has access to the entire edge arrival sequence ahead of time.  

\subparagraph*{Datasets.}
Our datasets are from SNAP\footnote{\url{https://snap.stanford.edu/data/}}~\cite{snapnets}.  The
\texttt{sx-superuser-c2a} and
\texttt{sx-askubuntu} datasets are graphs based on two stackexchange forums; a link represents one user posting an answer or comment on another user's question or answer. These two are temporal, i.e., links are ordered based on the time the comments occurred.  
The other dataset used, \texttt{Slashdot0811}, is not temporal; edges arrive in random order.  
Each edge represents a user who tagged another as ``friend'' or ``foe.''
The number of vertices and number of unique edges for each of these datasets is given in Table~\ref{tab:results}.

\subparagraph*{Experimental Setup.}
Our implementations are in Python. Our experiments were run on an Ubuntu Linux machine with a 2.8GHz i7-1165G7 Intel CPU with 32GB RAM.

\subparagraph*{Predictions.} We parameterize how to generate predictions with a variable $S$, which is roughly the standard deviation of the error.
For each value of $S$, we construct a perturbed edge sequence from the original ordering as follows. We traverse the edges in their original order, and for each edge, we sample an offset from a normal distribution with mean $0$ and standard deviation $S$. The target index is set to the current index plus this offset. If both the current and target indices have not yet been modified, we swap the edges at these positions; otherwise, the edge remains fixed. 

After processing all edges, we measure the prediction quality using two metrics relative to the original sequence: the average prediction error ($\eta_{\text{avg}}$) and the maximum prediction error ($\eta_{\text{max}}$).
For each $S \in \{0, 10, 20, \ldots\}$, we generate 10 independent perturbed sequences using the procedure described above. Each sequence yields its own average and maximum error values. We then run the algorithm on each sequence and report the average runtime and error across the 10 runs. In the plots, $\eta_{\text{avg}}$ is the mean of the 10 average-error values and $\eta_{\text{max}}$ is the mean of the 10 maximum-error values.

\begin{figure*}[t]
    \centering
    \hfill
    \includegraphics[width=0.47\textwidth]{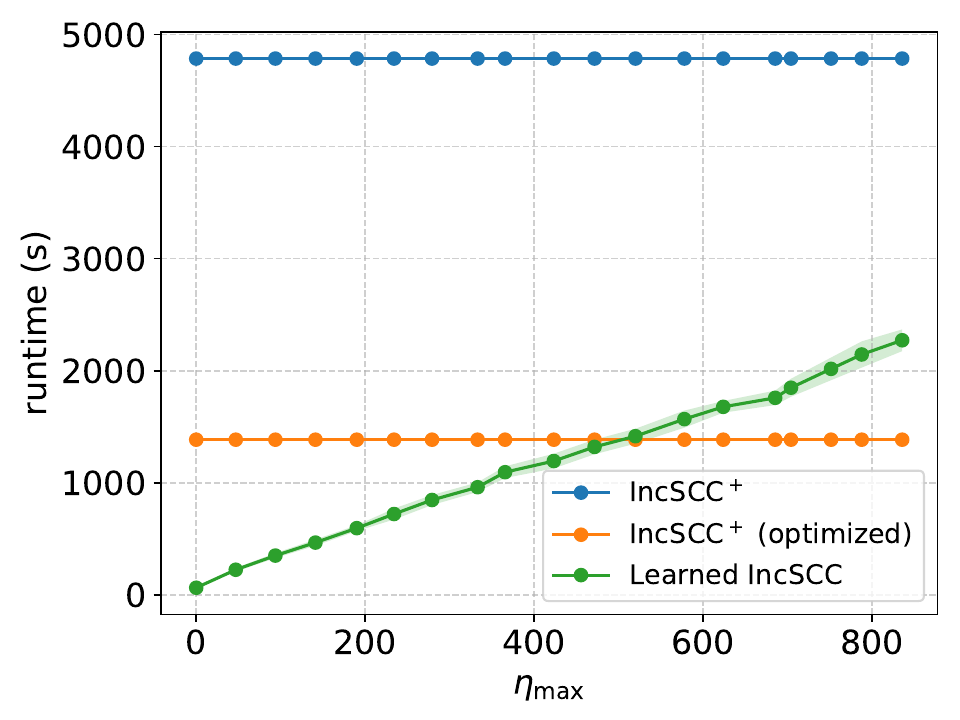}
    \hfill
    \includegraphics[width=0.47\textwidth]{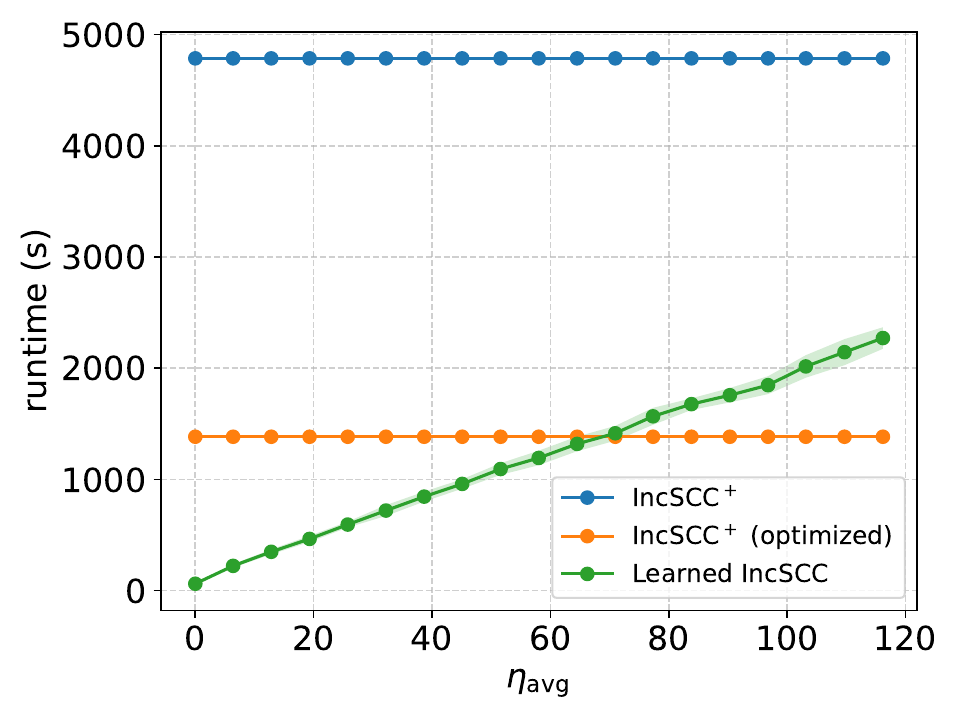}
    \hfill
    \vspace{-.1in}
    \caption{Runtime comparison on the \texttt{sx-askubuntu} dataset.
In the left and right plots, the x-axis denotes the maximum and average prediction error with respect to the actual edge sequence, respectively.
The green line shows the mean runtime, and the surrounding shaded region indicates the standard deviation.}
\label{fig:twoplots}
\end{figure*}

\begin{table*}[t]
\centering
\footnotesize
\setlength{\tabcolsep}{2pt}
\renewcommand{\arraystretch}{1.2}
\begin{tabular}{|c|c|c|c|c|c|c|c|c|}
\hline
\rowcolor{gray!20}
Dataset & $m$ & $n$ & Offline &\makecell{Learned \\ IncSCC \\ $\eta = 0$} & $\text{IncSCC}^+$ & \makecell{$\text{IncSCC}^+$ \\ (optimized)} & \makecell{{Crossover} \\ ${\eta}_{\text{avg}}$} & \makecell{{Crossover}\\${\eta}_{\text{max}}$}\\
\hline
\texttt{sx-superuser-c2a} & 289487 & 101052 & 60.70 & 34.21 & 932.10 & 331.33 & $\approx 32$ & $\approx 219$ \\
\hline
\texttt{sx-askubuntu} & 596933 & 159316 & 85.37 & 62.24 & 4786.33 & 1384.01 & $\approx 70$ & $\approx 517$\\
\hline
\texttt{Slashdot0811} & 905468 & 77360 & 71.33 & 47.92 & 2221.04 & 362.68 & $\approx 20$ & $\approx 150$\\
\hline
\end{tabular}
\vspace{.1in}
\caption{Comparison of the performance of Learned IncSCC against the baselines. The first two columns show the number of unique edges and nodes. The next four columns report the runtimes of the purely offline algorithm, Learned IncSCC with perfect predictions, and the two versions of $\text{IncSCC}^{+}$. Crossover $\eta_{\text{avg}}$ and crossover $\eta_{\text{max}}$ are the largest average and max errors for which Learned IncSCC outperforms $\text{IncSCC}^{+}$ (optimized).}
\label{tab:results}
\vspace{-.1in}
\end{table*}

\subparagraph*{Discussion.}
We include the plot for \texttt{sx-askubuntu} in Figure~\ref{fig:twoplots}, and include the results for the remaining datasets in Appendix~\ref{sec:appendix-plots}.  
With sufficiently accurate predictions, our algorithm gives significantly improved performance over both versions of $\text{IncSCC}^+$.  The experiments show our performance grows roughly linearly with the error; in fact it grows linearly with both the $\eta_{\text{avg}}$ and $\eta_{\text{max}}$ (our theoretical analysis bounds performance only in terms of $\eta_{\text{max}}$). 

Table~\ref{tab:results} summarizes further results, including the largest average and maximum $\eta$ for which our algorithm improves on optimized $\text{IncSCC}^+$.  

In Table~\ref{tab:results}, we also compare our algorithm with perfect predictions to the offline incremental algorithm from Section~\ref{sec:warmup}.  While both algorithms are given the entire edge sequence, our algorithm can adapt to errors on-the-fly,  while the offline algorithm is static.  Despite this overhead, our running time is not only comparable but in fact slightly better than the offline algorithm.  This demonstrates that our method is lightweight, while still allowing for potential error. 
\section{Conclusion}
In this work, we design a new learned algorithm to maintain the strongly connected components in an incremental graph.  Our theoretical bounds show that our algorithm is nearly optimal under reasonably good predictions and its performance degrades smoothly with respect to the prediction error.  

We test our algorithm against the best heuristic on real-world datasets. Our experiments show that (1) theory is predictive of practice, and (2) it is possible to achieve strong provable guarantees against bad predictions, while also being able to match (and even beat) the practical performance of state-of-the-art heuristics when predictions are good.

\bibliographystyle{plainurl}
\bibliography{scc}

@inproceedings{chen2024almost,
  title={Almost-linear time algorithms for incremental graphs: Cycle detection, sccs, st shortest path, and minimum-cost flow},
  author={Chen, Li and Kyng, Rasmus and Liu, Yang P and Meierhans, Simon and Probst Gutenberg, Maximilian},
  booktitle={Proceedings of the 56th Annual ACM Symposium on Theory of Computing},
  pages={1165--1173},
  year={2024}
}

@article{haeupler2012incremental,
  title={Incremental cycle detection, topological ordering, and strong component maintenance},
  author={Haeupler, Bernhard and Kavitha, Telikepalli and Mathew, Rogers and Sen, Siddhartha and Tarjan, Robert E},
  journal={ACM Transactions on Algorithms (TALG)},
  volume={8},
  number={1},
  pages={1--33},
  year={2012},
  publisher={ACM New York, NY, USA}
}

@inproceedings{bernstein2019decremental,
  title={Decremental strongly-connected components and single-source reachability in near-linear time},
  author={Bernstein, Aaron and Probst, Maximilian and Wulff-Nilsen, Christian},
  booktitle={Proceedings of the 51st Annual ACM SIGACT Symposium on theory of computing},
  pages={365--376},
  year={2019}
}

@article{bender2015new,
  title={A new approach to incremental cycle detection and related problems},
  author={Bender, Michael A and Fineman, Jeremy T and Gilbert, Seth and Tarjan, Robert E},
  journal={ACM Transactions on Algorithms (TALG)},
  volume={12},
  number={2},
  pages={1--22},
  year={2015},
  publisher={ACM New York, NY, USA}
}

@inproceedings{fan2017incremental,
  title={Incremental graph computations: Doable and undoable},
  author={Fan, Wenfei and Hu, Chunming and Tian, Chao},
  booktitle={Proceedings of the 2017 ACM International Conference on Management of Data},
  pages={155--169},
  year={2017}
}

@article{mitzenmacher2022algorithms,
  title={Algorithms with predictions},
  author={Mitzenmacher, Michael and Vassilvitskii, Sergei},
  journal={Communications of the ACM (CACM)},
  volume={65},
  number={7},
  pages={33--35},
  year={2022},
  publisher={ACM New York, NY, USA}
}

@article{zhang2017efficient,
  title={Efficient disk-based directed graph processing: A strongly connected component approach},
  author={Zhang, Yu and Liao, Xiaofei and Shi, Xiang and Jin, Hai and He, Bingsheng},
  journal={IEEE Transactions on Parallel and Distributed Systems},
  volume={29},
  number={4},
  pages={830--842},
  year={2017},
  publisher={IEEE}
}

@article{dhingra2016finding,
  title={Finding strongly connected components in a social network graph},
  author={Dhingra, Swati and Dodwad, Poorvi S and Madan, Meghna},
  journal={International Journal of Computer Applications},
  volume={136},
  number={7},
  pages={1--5},
  year={2016},
  publisher={Citeseer}
}

@book{hardekopf2009pointer,
  title={Pointer analysis: building a foundation for effective program analysis},
  author={Hardekopf, Benjamin Charles},
  year={2009},
  publisher={The University of Texas at Austin}
}

@article{tarjan1972depth,
  title={Depth-first search and linear graph algorithms},
  author={Tarjan, Robert},
  journal={SIAM Journal on Computing},
  volume={1},
  number={2},
  pages={146--160},
  year={1972},
  publisher={SIAM}
}

@inproceedings{bernstein2021incremental,
  title={Incremental scc maintenance in sparse graphs},
  author={Bernstein, Aaron and Dudeja, Aditi and Pettie, Seth},
  booktitle={29th Annual European Symposium on Algorithms (ESA 2021)},
  year={2021}
}

@inproceedings{roditty2013decremental,
  title={Decremental maintenance of strongly connected components},
  author={Roditty, Liam},
  booktitle={Proceedings of the Twenty-Fourth Annual ACM-SIAM Symposium on Discrete Algorithms},
  pages={1143--1150},
  year={2013},
  organization={SIAM}
}

@InProceedings{McCauleyMoNi25,
  author =  {McCauley, Samuel and Moseley, Benjamin and Niaparast, Aidin and Niaparast, Helia and Singh, Shikha},
  title =  {Incremental Approximate Single-Source Shortest Paths with Predictions},
  booktitle =  {52nd International Colloquium on Automata, Languages, and Programming (ICALP)},
  pages =  {117:1--117:20},
  series =  {Leibniz International Proceedings in Informatics (LIPIcs)},
  ISBN =  {978-3-95977-372-0},
  ISSN =  {1868-8969},
  year =  {2025},
  volume =  {334},
  editor =  {Censor-Hillel, Keren and Grandoni, Fabrizio and Ouaknine, Jo\"{e}l and Puppis, Gabriele},
  publisher =  {Schloss Dagstuhl -- Leibniz-Zentrum f{\"u}r Informatik},
  address =  {Dagstuhl, Germany},
  URL =    {https://drops.dagstuhl.de/entities/document/10.4230/LIPIcs.ICALP.2025.117},
  URN =    {urn:nbn:de:0030-drops-234946},
  doi =    {10.4230/LIPIcs.ICALP.2025.117},
}

@inproceedings{lin2022learning,
  title={Learning augmented binary search trees},
  author={Lin, Honghao and Luo, Tian and Woodruff, David},
  booktitle={International Conference on Machine Learning},
  pages={13431--13440},
  year={2022},
  organization={PMLR}
}

@inproceedings{
bai2023sorting,
title={Sorting with Predictions},
author={Xingjian Bai and Christian Coester},
booktitle={Thirty-seventh Conference on Neural Information Processing Systems},
year={2023},
url={https://openreview.net/forum?id=Qv7rWR9JWa}
}

@inproceedings{
zeynali2024robust,
title={Robust Learning-Augmented Dictionaries},
author={Ali Zeynali and Shahin Kamali and Mohammad Hajiesmaili},
booktitle={Forty-first International Conference on Machine Learning},
year={2024},
url={https://openreview.net/forum?id=XyhgssAo5b}
}

@article{baswana2019efficient,
  title={An efficient strongly connected components algorithm in the fault tolerant model},
  author={Baswana, Surender and Choudhary, Keerti and Roditty, Liam},
  journal={Algorithmica},
  volume={81},
  number={3},
  pages={967--985},
  year={2019},
  publisher={Springer}
}

@inproceedings{SakaueO22,
  author    = {Shinsaku Sakaue and Taihei Oki},
  
  title     = {Discrete-convex-analysis-based framework for warm-starting algorithms with predictions},
  booktitle = {35th Conference on Neural Information Processing Systems (NeurIPS)},
  year      = {2022},
}

@inproceedings{LattanziSV23,
  author       = {Silvio Lattanzi and
                  Ola Svensson and
                  Sergei Vassilvitskii},
  editor       = {Andreas Krause and
                  Emma Brunskill and
                  Kyunghyun Cho and
                  Barbara Engelhardt and
                  Sivan Sabato and
                  Jonathan Scarlett},
  title        = {Speeding Up Bellman Ford via Minimum Violation Permutations},
  booktitle    = {International Conference on Machine Learning, {ICML} 2023, 23-29 July
                  2023, Honolulu, Hawaii, {USA}},
  series       = {Proceedings of Machine Learning Research},
  volume       = {202},
  pages        = {18584--18598},
  publisher    = {{PMLR}},
  year         = {2023},
  url          = {https://proceedings.mlr.press/v202/lattanzi23a.html},
  bibsource    = {dblp computer science bibliography, https://dblp.org}
}

@article{POLAK2024106487,
title = {Learning-augmented maximum flow},
journal = {Information Processing Letters},
volume = {186},
pages = {106487},
year = {2024},
issn = {0020-0190},
doi = {https://doi.org/10.1016/j.ipl.2024.106487},
url = {https://www.sciencedirect.com/science/article/pii/S0020019024000176},
author = {Adam Polak and Maksym Zub},
}

@InProceedings{DaviesMVW,
  title = 	 {Predictive Flows for Faster Ford-Fulkerson},
  author =       {Davies, Sami and Moseley, Benjamin and Vassilvitskii, Sergei and Wang, Yuyan},
  booktitle = 	 {Proc. of the 40th International Conference on Machine Learning (ICML)},
  pages = 	 {7231--7248},
  year = 	 {2023},
  editor = 	 {Krause, Andreas and Brunskill, Emma and Cho, Kyunghyun and Engelhardt, Barbara and Sabato, Sivan and Scarlett, Jonathan},
  volume = 	 {202},
  series = 	 {Proceedings of Machine Learning Research},
  month = 	 {23--29 Jul},
  publisher =    {PMLR},
  url = 	 {https://proceedings.mlr.press/v202/davies23b.html}
}

@inproceedings{DinitzILMV21,
  author    = {Michael Dinitz and
               Sungjin Im and
               Thomas Lavastida and
               Benjamin Moseley and
               Sergei Vassilvitskii},
  editor    = {Marc'Aurelio Ranzato and
               Alina Beygelzimer and
               Yann N. Dauphin and
               Percy Liang and
               Jennifer Wortman Vaughan},
  title     = {Faster Matchings via Learned Duals},
  booktitle = {Proc. 34th Conference on Advances in Neural Information Processing Systems (NeurIPS)},
  pages     = {10393--10406},
  year      = {2021},
  bibsource = {dblp computer science bibliography, https://dblp.org}
}

@inproceedings{KraskaBCDP18,
  author       = {Tim Kraska and
                  Alex Beutel and
                  Ed H. Chi and
                  Jeffrey Dean and
                  Neoklis Polyzotis},
  editor       = {Gautam Das and
                  Christopher M. Jermaine and
                  Philip A. Bernstein},
  title        = {The Case for Learned Index Structures},
  booktitle    = {Proc.\ 44th Annual International Conference on Management of Data, (SIGMOD)},
  pages        = {489--504},
  publisher    = {{ACM}},
  year         = {2018},
  url          = {https://doi.org/10.1145/3183713.3196909},
  doi          = {10.1145/3183713.3196909},
  bibsource    = {dblp computer science bibliography, https://dblp.org}
}

@InProceedings{liu2023predicted,
  title = 	 {The Predicted-Updates Dynamic Model: Offline, Incremental, and Decremental to Fully Dynamic Transformations},
  author =       {Liu, Quanquan C. and Srinivas, Vaidehi},
  booktitle = 	 {Proceedings of Thirty Seventh Conference on Learning Theory},
  pages = 	 {3582--3641},
  year = 	 {2024},
  editor = 	 {Agrawal, Shipra and Roth, Aaron},
  volume = 	 {247},
  series = 	 {Proceedings of Machine Learning Research},
  month = 	 {30 Jun--03 Jul},
  publisher =    {PMLR},
  url = 	 {https://proceedings.mlr.press/v247/liu24c.html}}

@inproceedings{
benomar2024learningaugmented,
title={Learning-Augmented Priority Queues},
author={Ziyad Benomar and Christian Coester},
booktitle={The Thirty-eighth Annual Conference on Neural Information Processing Systems},
year={2024},
url={https://openreview.net/forum?id=1ATLLgvURu}
}

@inproceedings{BrandFNP24,
  author       = {Jan van den Brand and
                  Sebastian Forster and
                  Yasamin Nazari and
                  Adam Polak},
  editor       = {David P. Woodruff},
  title        = {On Dynamic Graph Algorithms with Predictions},
  booktitle    = {Proceedings of the 2024 {ACM-SIAM} Symposium on Discrete Algorithms,
                  {SODA} 2024, Alexandria, VA, USA, January 7-10, 2024},
  pages        = {3534--3557},
  publisher    = {{SIAM}},
  year         = {2024},
  url          = {https://doi.org/10.1137/1.9781611977912.126},
  doi          = {10.1137/1.9781611977912.126},
  bibsource    = {dblp computer science bibliography, https://dblp.org}
}

@inproceedings{HenzingerSSY24,
  author       = {Monika Henzinger and
                  Barna Saha and
                  Martin P. Seybold and
                  Christopher Ye},
  editor       = {Venkatesan Guruswami},
  title        = {On the Complexity of Algorithms with Predictions for Dynamic Graph
                  Problems},
  booktitle    = {15th Innovations in Theoretical Computer Science Conference, {ITCS}
                  2024, January 30 to February 2, 2024, Berkeley, CA, {USA}},
  series       = {LIPIcs},
  volume       = {287},
  pages        = {62:1--62:25},
  publisher    = {Schloss Dagstuhl - Leibniz-Zentrum f{\"{u}}r Informatik},
  year         = {2024},
  url          = {https://doi.org/10.4230/LIPIcs.ITCS.2024.62},
  doi          = {10.4230/LIPICS.ITCS.2024.62},
  bibsource    = {dblp computer science bibliography, https://dblp.org}
}

@inproceedings{McCauleyMNS23,
  author       = {Samuel McCauley and Benjamin Moseley and Aidin Niaparast and Shikha Singh},
  title        = {Online List Labeling with Predictions},
  booktitle    = {Proc. 36th Conference on Neural Information Processing Systems (NeurIPS)},
  year         = {2023},
 editor = {A. Oh and T. Naumann and A. Globerson and K. Saenko and M. Hardt and S. Levine},
 pages = {60278--60290},
 publisher = {Curran Associates, Inc.},
 volume = {36}
}

@misc{sccblog,
  title = {My own algorithm — offline incremental strongly connected components in O(m log(m))},
  howpublished = {\url{https://codeforces.com/blog/entry/91608}},
  author = {Mateusz Radecki},
year = {2021},
  note = {Accessed: 2025-09-30}
}

@misc{snapnets,
    author= {Jure Leskovec and Andrej Krevl},
    title= {{SNAP Datasets}: {Stanford} Large Network Dataset Collection},
    howpublished = {\texttt{http://snap.stanford.edu/data}},
    month        = jun,
    year         = 2014
}

@inproceedings{
niaparast2025faster,
title={Faster Global Minimum Cut with Predictions},
author={Helia Niaparast and Benjamin Moseley and Karan Singh},
booktitle={Forty-second International Conference on Machine Learning},
year={2025},
url={https://openreview.net/forum?id=FeZoO7mfG7}
}

@inproceedings{
mccauley2024incremental,
title={Incremental Topological Ordering and Cycle Detection with Predictions},
author={Samuel McCauley and Benjamin Moseley and Aidin Niaparast and Shikha Singh},
booktitle={Forty-first International Conference on Machine Learning},
year={2024},
url={https://openreview.net/forum?id=wea7nsJdMc}
}

@article{dinitz2024binary,
  title={Binary search with distributional predictions},
  author={Dinitz, Michael and Im, Sungjin and Lavastida, Thomas and Moseley, Ben and Niaparast, Aidin and Vassilvitskii, Sergei},
  journal={Advances in Neural Information Processing Systems},
  volume={37},
  pages={90456--90472},
  year={2024}
}

@article{mccauley2026stable,
  title={Stable Matching with Predictions: Robustness and Efficiency under Pruned Preferences},
  author={McCauley, Samuel and Moseley, Benjamin and Niaparast, Helia and Singh, Shikha},
  journal={arXiv preprint arXiv:2602.02254},
  year={2026}
}

\newpage
\appendix
\section{Additional Plots}\label{sec:appendix-plots}

In this section, we include the plots for \texttt{sx-superuser-c2a} and \texttt{Slashdot0811} datasets. The setup is similar to Figure~\ref{fig:twoplots}. 

\begin{figure*}[h]
    \centering
    \hfill
    \includegraphics[width=0.47\textwidth]{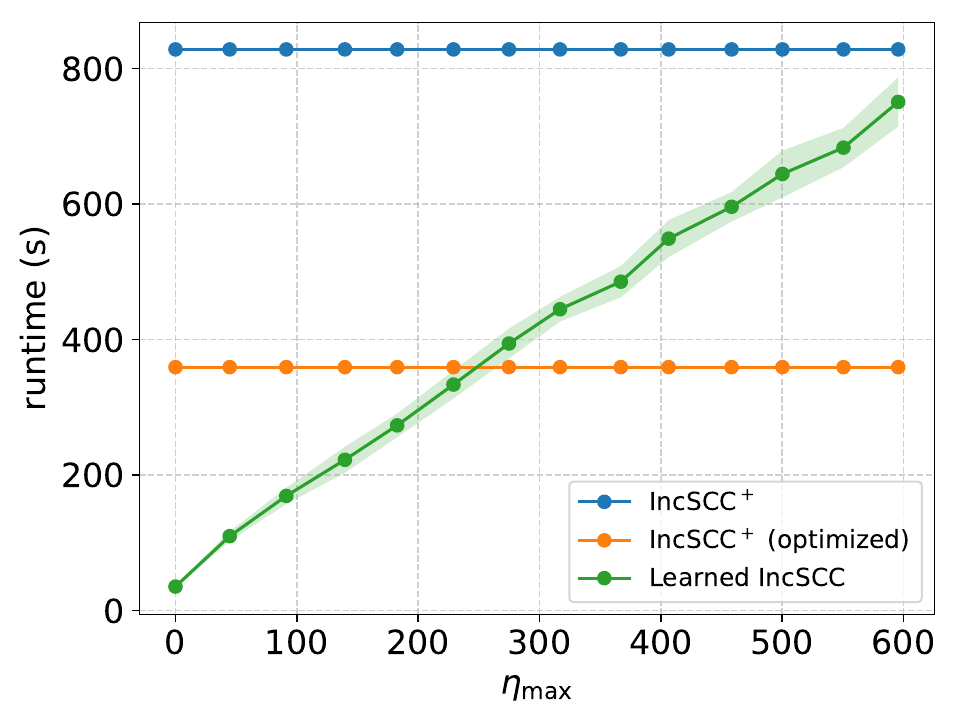}
    \hfill
    \includegraphics[width=0.47\textwidth]{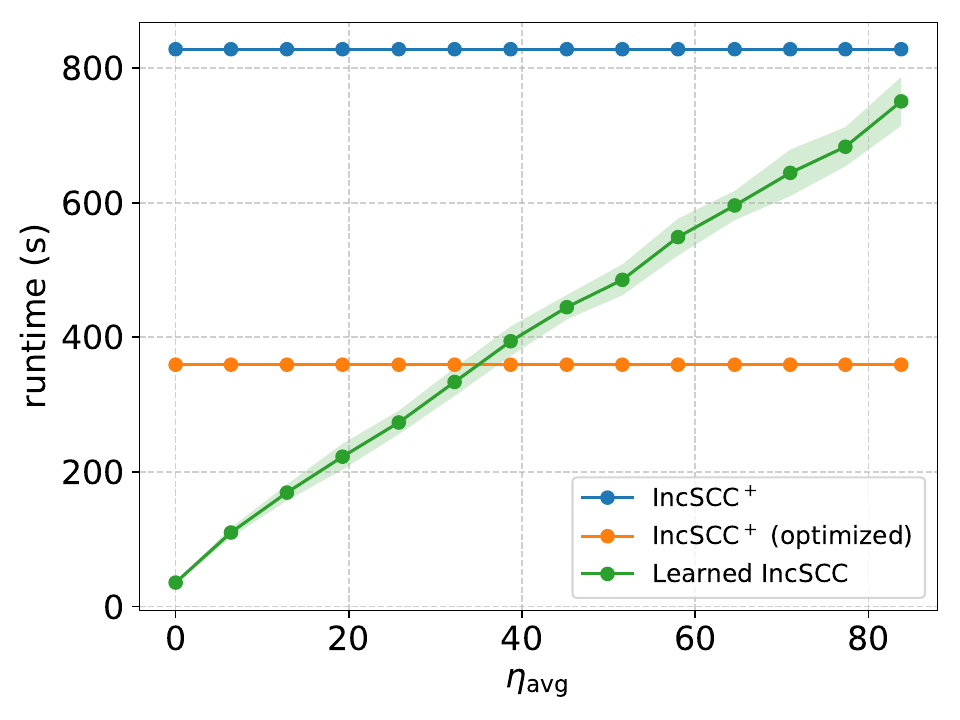}
    \hfill
    \vspace{-.1in}
    \caption{Runtime comparison on the \texttt{sx-superuser-c2a} dataset.}
\label{fig:superuser}
\end{figure*}
\begin{figure*}[h]
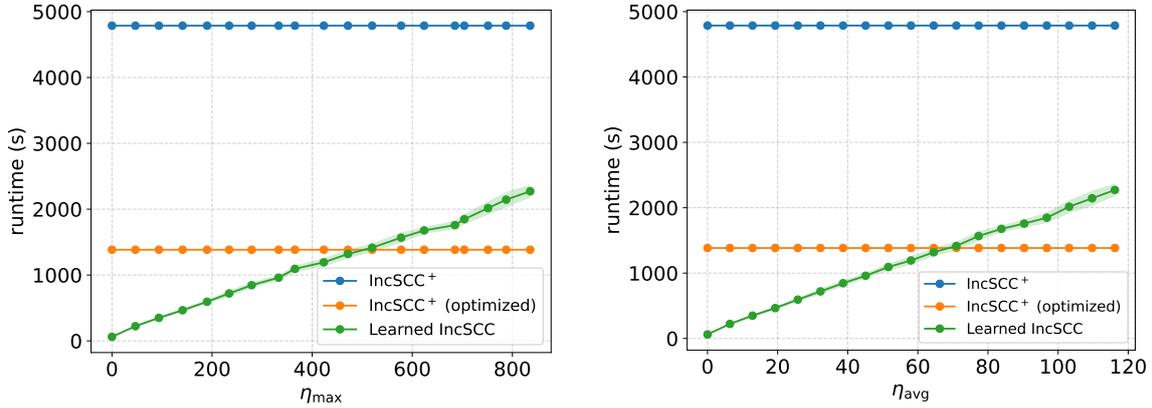

    \centering
    \hfill
    \includegraphics[width=0.47\textwidth]{sx-askubuntu-results_max_eta.pdf}
    \hfill
    \includegraphics[width=0.47\textwidth]{sx-askubuntu-results_average_eta.pdf}
    \hfill
    \vspace{-.1in}
    \caption{Runtime comparison on the \texttt{Slashdot0811} dataset.}
\label{fig:slashdot}
\end{figure*}

\end{document}